\pgfplotsset{compat=newest}
\def\argmin{\operatorname{arg~min}}
\def\E{\mathbb{E}}
\def\Y{\mathbf{Y}}
\def\H{\mathbf{H}}
\def\B{\mathbf{B}}
\def\Eb{\mathbf{E}}
\def\Lb{\mathbf{L}}
\def\X{\mathbf{X}}
\def\W{\mathbf{W}}
\def\u{\upsilon}
\def\P{\mathbb{P}}
\def\ie{{\em i.e.}}
\def\R{\mathbb{R}}
\def\No{\mathbf{N}}
\def\L{\mathcal{L}}
\def\i{\mathbf{1}}
\def\us{\mathbf{u}}
\def\B{\mathcal{B}}
\def\d{\mathrm{d}}
\def\p{\mathtt{P}}
\def\x{\mathtt{x}}
\def\I{\mathbf{I}}
\def\Q{\mathbf{Q}}
\def\sinr{\mathtt{SINR}}
\def\snr{\mathtt{SNR}}
\def\sir{\mathtt{SIR}}
\newtheorem{lemma}{Lemma}{}
\author{Sreejith~T.~Veetil,
        Kiran~Kuchi  and~ Radha Krishna Ganti
\thanks{Sreejith T. V. and Kiran Kuchi are  with the Department
of Electrical Engineering, IIT Hyderabad, India. Radha Krishna Ganti is  with the Department
of Electrical Engineering, IIT Madras, India.}
}
\title{Performance of Cloud Radio Networks}
\begin{document}
\maketitle
\begin{abstract}
Cloud radio networks coordinate transmission among base stations (BSs) to reduce the interference effects, particularly for the cell-edge users. In this paper, we analyze the performance of a cloud network with static clustering where geographically close BSs form a cloud network of cooperating BSs. Because, of finite cooperation, the interference in a practical cloud radio cannot be removed and in this paper, the distance based interference is taken into account in the analysis. In particular, we consider centralized zero forcing  equalizer and dirty paper precoding for cancelling the interference. Bounds are developed on the signal-to-interference ratio distribution and achievable rate with full and limited channel feedback from the cluster users. The adverse effect of finite clusters on the achievable rate is quantified. We show that, the number of cooperating BSs is more crucial than the cluster area  when full channel state information form the cluster is available for precoding. Also, we study the impact of limiting the channel state information on the achievable rate. We show that even with a practically feasible feedback of about five to six channel states from each user, significant gain in mean rate and cell edge rate compared to conventional cellular systems can be obtained. 
\end{abstract}
\begin{IEEEkeywords}
Cellular networks, cloud networks, clustering, stochastic geometry, channel state information.
\end{IEEEkeywords}

\section{Introduction}
The evolution of cellular communication networks from 1G through 4G \cite{3p9G:standard} has resulted in a steady increase in the allowable rates for users (UEs). In cellular systems employing universal frequency reuse, other cell interference (OCI) is the main bottleneck that limits the system capacity. Mitigation of OCI using interference cancellation strategies results in improvement of achievable  rates. Cloud radio network, which is also labeled as network multiple-input-multiple-output (MIMO) \cite{Telatar99capacityof,Caire:2009A,LTEcomp,Gorokhov,Fettweis,Gaal:2012,Foschini:2006a,Gersbert:Dec2010,Wang:Nov2011,Poor:2012a,Lozano:2009e} is a centralized precoding architecture that can effectively remove OCI. 

In a cloud radio network, a group of geographically close base-stations (BSs) are connected to a central processing unit (or cloud) through optical fiber. The baseband processing is done at the cloud, while the waveforms are exchanged between BSs and cloud through fiber. The function of the BS mainly involves carrier power amplification and transmission through an antenna. This type of BS with reduced functionality is referred to as a remote radio head-end (RRH). Use of low power RRHs generally reduces the size and cost significantly compared to conventional BSs. Furthermore, availability of baseband signals at the cloud enables interference suppression techniques to be used both in the downlink and the uplink. In addition, cloud facilitates traffic load balancing through joint scheduling among the RRHs. 

In this paper, we are primarily interested in determining the rate gain that can be obtained through joint processing of signals  with geographical clustering and limited channel feedback.

\subsection{Motivation and Related Works}

In \cite{FoschiniKarakayaliValenzuela}, Foschini et al.  compared the performance of conventional cellular networks (CCN)  to that of  coordinated transmission, and showed an enormous gain in using coordinated transmission in terms of spectral efficiency. A detailed overview of cooperative cellular networks and several possible degrees of cooperation can be found in \cite{GesbertHanlyShamaiYu}. In \cite{DingVincent}, authors presented the performance of distributed antenna arrays in cloud radio access networks with spatially random BSs and the minimal number of BSs to meet a predefined quality of service is analyzed using stochastic geometry.  However,  the interference from BSs outside the cluster which is an performance limiting  factor is not accounted.   In \cite{Gesbert:2008icc}, CoMP is analyzed for sum rate maximization in uplink using linear beam forming with the assumption that BSs have full local and non-local channel state information.  

 Traditional grid model based approach is used in \cite{heath}, to obtain the fundamental limits on the capacity of a cloud radio network. The authors have suggested that even with a faster backhaul or more efficient signal processing, the gain in capacity from a cloud radio network as opposed to a conventional cellular network cannot be improved due to inter-cluster interference. 

The size of the cloud is limited by the propagation delay in optical fiber communication and other implementation constraints. Hence cloud processing is used among  BSs grouped geographically (clusters). In this case, users located at the boundaries between the clusters receive interference from the BSs of neighboring clusters that can not be suppressed. The achievable rate as obtained by zero-forcing  dirty paper coding (ZF-DPC) with complete CSI takes a hit because the inter-cluster interference cannot be suppressed. In \cite{spcomKuchi14},  ZF-DPC method with complete and limited channel feedback is analyzed with distance depended interference. It is shown that even with a limited feedback of  three to six   channel states from each user, the cloud radio has the potential to offer a significant increase in the capacity compared to conventional systems. However,   clustering of the BSs  is not considered. The impact of clustering with distance dependent interference and full-channel state information  is considered in our earlier work \cite{STVKKRK_ICC15}. 

 Cooperative multipoint transmission with partial cooperation is studied in \cite{Mennerich:2011}. This paper considered user centric cooperation and assumes each user report a subset of strongly interfering BSs.  Multiple antennas at each of the cooperating cells  utilize wideband beamforming to steer the signal more towards the  centre of cluster area and showed that,  partial cooperation  promises high gains for most UEs.  In \cite{Garcia:2010}, dynamic clustering is considered, wherein cooperative clusters periodically regroup. It can improve the cell-edge SE, but, additional complex scheduling and backhaul connections are required. In \cite{Tanbourgi},  a user centric BS clustering along with channel dependent   joint transmission is analyzed.   It is shown that for small cooperative clusters, non-coherent joint transmission by small cells provides spectral efficiency gains without significantly increasing cell load.  A joint transmission scheme is analyzed for heterogeneous networks in \cite{NigamHaenggi}, and is shown that the BS cooperation boosts the coverage probability    by $17$\% for a general user and by $24$\% for a worst case user.

 User centric clustering  requires each user to know their strongest interfering BSs and this set of BSs should be the same for jointly served users (for the BSs to perform appropriate beamforming) and which is typically not the case. In our work, instead of user centric or dynamic BS cooperation, we consider fixed geographically clustering  of BSs which does not require complex scheduling between clusters.  The BSs in a cluster are connected to a cloud processor where the base band signals are processed and send to the remote radio head (RRH) for transmission.   We provide analytical expressions for signal-to-noise plus interference ratio ($\sinr$) distribution  with the residual inter cluster interference for zero-forcing dirty paper coding transmission.

\subsection{Main Contributions}

In this paper, we evaluate the  coverage and  rate in a cloud radio network with distance dependent interference. The main contributions are as follows:
\begin{itemize}
\item \textit{Geographical clustering with complete channel knowledge:} In section \ref{sec:clust_full}, we consider fixed area clusters in which the BSs inside a cluster can share complete channel information and hence cancel the  intra-cluster interference at the users associated to the cluster. The $\sinr$ distribution of a typical user with the above BS clustering model is provided.
 
\item \textit{Clustering and ideal cloud:}
 \emph{Ideal clouds} refers to the the hypothetical network where in all the nodes in the network can cooperate. In an ideal cloud network, by appropriate precoding, the interference can be completely  eliminated and hence the system becomes noise limited. This can provide us with  upper bounds  on the  performance of a network.  For finite clustering, we obtain the optimal cluster size required to achieve an $(1-\epsilon),  0<\epsilon<1$ fraction of the performance of an ideal cloud.
 
 \item \textit{Clutering with limited channel knowledge:} The $\sinr$ distribution  of a typical user  when the user can feedback only a limited   channel channel state information to the cloud processor is obtained in Section \ref{sec:clust_pcsi}.  
 \end{itemize}
%
%
 
\subsection{Organization of the paper}
In Section \ref{sec:sysmodel}, the system model used in this paper is discussed. In Section \ref{sec:clust_full}, the performance of geographical clustering is analyzed. In Section \ref{sec:Ideal}, the performance of limited clustering is compared with an ideal cloud. In Section \ref{sec:clust_pcsi}, the performance of clustering with limited channel feedback is discussed and the paper is concluded in  Section \ref{sec:conc}.
\section{System Model}
\label{sec:sysmodel}
In this section, we provide a mathematical model of the system that will be used in the subsequent analysis. We begin with the spatial distribution of the base stations.
\subsection{Network model} 
The locations of  the  BSs are modeled as a spatial Poisson point process (PPP)  \cite{stoyan} $\Phi_b$ of density $\lambda_b$ and the user equipments (UEs) follow another independent PPP $\Phi_u$ of intensity $\lambda_u$ in $\R^2$.

 We assume a nearest BS connectivity model, \ie, a user, $x\in\Phi_u$, connects to a BS $y\in\Phi_b$, if and only if $\|x-y\|<\|x-z\|$, $\forall z\in \Phi_b$. The nearest BS connectivity model results in a Voronoi tessellation of the plane with respect to the BS locations. Hence the service area of a BS is the Voronoi cell associated with it.
%

\subsection{Channel and pathloss model}
A standard path loss model $\|x\|^{-\alpha}$, $\alpha>2$, is assumed. Independent Rayleigh fading with unit mean is assumed between any pair of BS and UE.  Therefore, the received signal amplitude for a stream from a BS $x\in \Phi_b$ to a UE $y\in\Phi_u$ is given by
$h_{xy}r_{xy}^{-\alpha/2}$, where $r_{xy}$ is the distance between UE $x$ and BS $y$.  Also $h_{xy}$ is the corresponding fading term with $h_{xy} \sim \mathcal{CN}(0,1)$. The noise term is assumed to be circularly symmetric complex Gaussian noise (AWGN) with zero mean and variance $\sigma^2$. Assuming a transmit power of $P_t$ for each transmitter, the average received signal-to-noise-power ratio for a user  located at $x$  from its associated  BS $y$ is $\snr_{xy}=P_t r_{xy}^{-\alpha}/{\sigma^2}$.

\subsection{Geographical Clustering}
We assume that the BS clusters are selected geographically and in particular  assume  square clusters as shown in Fig.\ref{fig:cluster}. Without loss of generality,  we analyze the performance of the users in  the  cluster that contains the origin. To simplify the analysis (integrals), we approximate the cluster that contains the origin  by a disc\footnote{If the approximating disc forms an incircle to the square cell, the analyzed  coverage will be a lower bound to the actual coverage probability. Similarly,  the analyzed  coverage will be an upper  bound  if  the disc forms a circumcircle. See Figure \ref{fig:CovClust}. }.

 The BSs in  a cluster  are connected to a central unit (cloud) and all the channel information generated by the users  is processed at this central unit    to precode the transmit  signals. We assume that the clustering area is defined  only to coordinate the BSs and not for users. A user will always find the nearest BS regardless of the area it belongs to.  We denote the cluster regions by $C_i, i=0,1,\hdots$. In particular $C_0$ denotes the cluster centered at the origin. 
 We assume all the clusters are identical and since $\Phi_b$ is stationary and hence we focus on the cluster $C_0$.
\begin{figure}[h]
\centering
\includegraphics[scale=0.4]{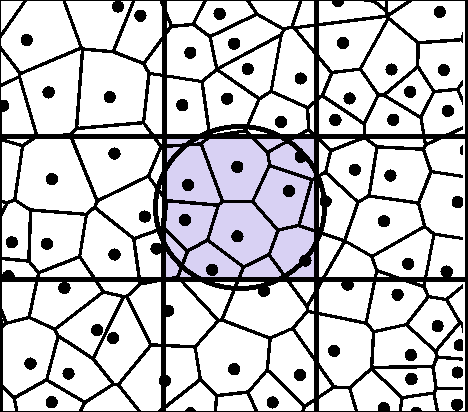}
\caption{Illustration of the a cloud radio system employing clustering. Here the BSs in each rectangular area are coordinated using a central cloud processor. In this paper we assume that the cluster region $C_0$ is a  disc to simplify the analysis.}
\label{fig:cluster}
\end{figure}

\subsection{Received Signal Vector}
Let $m = |C_0\cap\Phi_b|$ denote the size (number of BSs in the cluster) of the cluster $C_0$.
Let ${\X_m = [x_1~x_2~\hdots~x_m]^T}$ be the transmitted symbol vector\footnote{We used $A^T$ to denote transpose of $A$ and $ A^H$ to denote the conjugate transpose.} by the $m$-BSs of the cluster $C_0$. Then the received signal vector of the cluster users denoted by $\Y_m=\left[ y_1, y_2, \hdots, y_m\right]^T$ is given by,

\begin{align}
\begin{bmatrix}
   y_1\\
   y_2\\
   \vdots\\
   y_m
  \end{bmatrix} &= \begin{bmatrix}
  h_{11}r_{11}^{-\alpha/2} & h_{12}r_{12}^{-\alpha/2} & \hdots & h_{1m}r_{1m}^{-\alpha/2}\\
  h_{21}r_{21}^{-\alpha/2} & h_{22}r_{22}^{-\alpha/2} & \hdots & h_{2m}r_{2m}^{-\alpha/2}\\
    \vdots & \vdots & \vdots & \vdots\\
  h_{m1}r_{m1}^{-\alpha/2} & h_{m2}r_{m2}^{-\alpha/2} & \hdots & h_{mm}r_{mm}^{-\alpha/2}
   \end{bmatrix} \begin{bmatrix}   x_1\\
   x_2\\
   \vdots\\
   x_m
 \end{bmatrix} + \begin{bmatrix}   I_1\\
    I_2\\
    \vdots\\
    I_m
   \end{bmatrix} + \begin{bmatrix}   n_1\\
   n_2\\
   \vdots\\
   n_m
  \end{bmatrix},
\end{align} \ie, \begin{align}
\Y_m&=\H_m \X_m+\I_m+\No_m,
\label{eq:clusmain}
\end{align}
where $\No_m = [n_1~n_2~\hdots~n_m]^T$, with $n_i\sim \mathcal{CN}{(0,\sigma^2)}$. Also  $\I_m=\left[I_1, I_2, \hdots, I_m  \right]^T$  are the interfering signals from BSs outside  the cluster $C_0$, where  $I_i =\sum_{b\in\Phi_b^\prime} h_{ib} r_{ib}^{-\alpha/2}x_{b}$ and ~{$\Phi_b^\prime=\Phi_b\setminus\ C_0\cap\Phi_b$} and $\H_m$ is the channel matrix.  In
our model, users associate with the nearest BS, a BS schedules a
randomly selected user associated with it. 
\section{Clustering with complete cooperation}\label{sec:clust_full}
In \cite{dpc}, Costa proved that by dirty paper (DP) precoding, in a network where the interference is non-causally known at the transmitter, it is possible to achieve the
same capacity as if there were no interference. 
A reduced-complexity precoder  is presented in \cite{CaireShamai}. This technique  uses the  $\Lb\Q$ factorisation of the channel matrix to obtain a lower triangular channel matrix which decouples users in a layered manner, and helps in DPC implementation. 
This technique nulls the interference between data streams and hence the name zero forcing dirty paper coding ~(ZF-DPC). We use  this reduced complexity ~ZF-DPC based algorithm to obtain an achievable rate in a  cloud radio network. 
\subsection{Zero-forcing dirty paper coding}
Since the user channels and the transmitted  signals are known non-causally,  the intra-cluster interference   can be canceled by using ZF-DPC using appropriate precoding. Let $\H_m = \Lb_m \Q_m$ be the $\Lb \Q$ decomposition of $\H_m$. Using  $\W_m=\Q_m^H$ as the precoding matrix, the received signal by the users is 
\begin{align}
\Y_m&=\H_m\W_m \X_m+\I_m+\No_m, \nonumber\\
 &= {\Lb_m  \X_m + \I_m+\No_m},\nonumber\\
 \begin{bmatrix}
   y_1\\
   y_2\\
   \vdots\\
   y_m
  \end{bmatrix} &= \begin{bmatrix}
   l_{11}&0&\hdots&0\\
   l_{21}&l_{22}&\hdots&0\\
   \vdots&\vdots&\vdots&\vdots\\
   l_{m 1}&l_{m 2}&\hdots&l_{m m}
 \end{bmatrix} \begin{bmatrix}   x_1\\
   x_2\\
   \vdots\\
   x_m
 \end{bmatrix} + \begin{bmatrix}   I_1\\
    I_2\\
    \vdots\\
    I_m
   \end{bmatrix} + \begin{bmatrix}   n_1\\
   n_2\\
   \vdots\\
   n_m
  \end{bmatrix}  .
\label{rec_sig_HomogClus}
 \end{align}
Hence the  signal received  by a user  $u_i$  is
  \begin{align*}
  {y_i} &= {l_{ii}~x_{i}+\sum_{j<i}l_{ij}x_j +I_i+ n_i} \mbox{\quad for\quad} i=1, 2,\hdots m.
  \end{align*}
  For a user $u_i$ (associated with BS $b_i \in C_0\cap \Phi_b$), the precoder   $\W_m$ helps to  cancel interference from BSs with indices $j>i$ of the cluster.
 The  residual  interference corresponding to BSs $j<i$ can be eliminated by using  DPC successively. A suboptimal implementation of DPC, Tomlinson-Harashima Precoding (THP), can be found in \cite{Windpassinger_THP}.   
 It should also be noted that the signals from other clusters contributing interference to  the users of cluster $C_0$   are also precoded using their own  cluster channel state information. However,  for the users in  $C_0$, these interfering signals also have the same power as the precoding is an  unitary matrix  and will not change the  received  interference power. Therefore, the post processing $\sinr$ for the UE $u_i$ is given by,
    \begin{align}
    \sinr_{i}=\frac{ |l_{ii}|^2}{\sigma^2+\I_i(\Phi)},
    \end{align}
      where $\I_i(\Phi)=\sum_{b\in\Phi_b^\prime}  |h_{ij}|^2 r_{ij}^{-\alpha}$. The downlink rate for the  UE $u_i$ is
  \begin{equation}
  {C_i} = {\log\left( 1+ \frac{|l_{ii}|^2 }{\sigma^2+\I_i(\Phi)} \right)}~\text{bps/Hz.} \label{eq:cidpc_homog_cloud}
  \end{equation}
The diagonal elements of the lower triangular matrix, $\Lb$, are weighted linear combinations of exponential random variables each scaled with distance dependent pathloss, \cite{LCWangCovPerfAnalMUMIMOBCS,CaireOnAchvblthrptMulantGauBC}.
Because of nearest BS connectivity, the off-diagonal terms are small compared to the diagonal terms. Hence, in the matrix $\H_m$, the dominant element in any row is the diagonal element, $h_{ii} r_{ii}^{-\alpha/2}$, where $r_{ii}$ is the distance to the tagged BS from $i$th UE.  Therefore, we can approximate $l_{ii}\approx h_{ii} r_{ii}^{-\alpha/2}$. This key approximation makes the analysis tractable and is validated through simulations.

\subsection{Coverage Probability}
Without lose of generality, we consider a cluster around the origin, \mbox{$D_c=\mathcal{B}(O,R)$} and a typical user, $u$, dropped in this cluster randomly, as shown in Fig.~\ref{clust_anal_setup}. Let $r_u$ denote the distance of this typical user from the origin and $r_b$ denote the distance of the typical user to its closest BS. A user is   in coverage if the received $\sinr$ is above a threshold $T$. We now evaluate the coverage probability of a typical user. 

%


\begin{table}[ht]
\renewcommand{\arraystretch}{1.3}
\centering \caption{Distance distributions}
\begin{tabular}{c c c l l}
\hline
\hline
1.&$r_b$& Nearest distance& ${  f_{r_b}(r_b)=2\pi\lambda r_b e^{-\lambda \pi r_b^2} }$\hspace{1cm}& $0\leq r_b$ \\[5pt]
2.&$r_u$& Distance of user& $f_{r_u}(r_u)=\dfrac{2 r_u}{R^2}$&$0\leq r_u\leq R$\\
\hline
\hline
\end{tabular}
\label{tab:dist_distri}
\end{table}
\begin{figure}
\begin{center}
\begin{tikzpicture}[scale=0.6]
\draw[fill=gray!10,draw=black!0] (0,0)node(c){}  circle (5cm);
\draw[fill=gray!0,draw=black] (2.2,1.8) node(u1) {}  circle (1.5cm);
\draw[draw=black]  (0,0) circle (5cm);
\draw[] (-6.5,0) -- (6.5,0) coordinate (x axis);
\draw[] (0,-6.5) -- (0,6.5) coordinate (y axis);
\coordinate (x) at (0,0);
\coordinate (y) at (2.2,1.8);
\coordinate (y1) at (3.7,1.8);
\coordinate (z1) at (-4.5,-3);
\coordinate (b1) at (3.7,1.8);
\coordinate (b2) at (-0.28,-4.8);
\coordinate (R) at (0,5);
\draw (x) -- (y);
\draw(-1.2,2) node {$\mathcal{B}(0,R)$};
\fill (u1) circle [radius=4pt];
\draw (2.2,1.8) -- (b1);
\path let \p1 = (u1) in node  at (\x1,2.2) {$u$};
\draw (3,1.4) node  {$r_{b}$};
\draw (1,1.4) node  {$r_{u}$};
\node [draw,scale=0.3,regular polygon,regular polygon sides=3,fill=black!100]at (b1) {}; 
\node [draw,scale=0.3,regular polygon,regular polygon sides=3,fill=black!100]at (5.2,5.4) {}; 
\draw (6.2,5.4) node  {BS};
\fill (5.2,4.4) circle [radius=4pt];
\draw (6.2,4.4) node  {UE};
\node [fill,draw,scale=0.3,circle]at (x) {};  
\draw(-0.5,0.4) node {$O$};
\node [draw,scale=0.3,regular polygon,regular polygon sides=3,fill=black!100]at (-2,-3) {}; 
\node [draw,scale=0.3,regular polygon,regular polygon sides=3,fill=black!100]at (-0.4,-0.5) {}; 
\node [draw,scale=0.3,regular polygon,regular polygon sides=3,fill=black!100]at (-3,2) {}; 
\node [draw,scale=0.3,regular polygon,regular polygon sides=3,fill=black!100]at (2,-2) {}; 
\node [draw,scale=0.3,regular polygon,regular polygon sides=3,fill=black!100]at (3,5) {}; 
\node [draw,scale=0.3,regular polygon,regular polygon sides=3,fill=black!100]at (6,2) {}; 
\node [draw,scale=0.3,regular polygon,regular polygon sides=3,fill=black!100]at (1,4) {}; 
\node [draw,scale=0.3,regular polygon,regular polygon sides=3,fill=black!100]at (3,-5) {}; 
\node [draw,scale=0.3,regular polygon,regular polygon sides=3,fill=black!100]at (5.5,-1) {}; 
\node [draw,scale=0.3,regular polygon,regular polygon sides=3,fill=black!100]at (-2,5) {}; 
\node [draw,scale=0.3,regular polygon,regular polygon sides=3,fill=black!100]at (-5.5,1) {}; 
\node [draw,scale=0.3,regular polygon,regular polygon sides=3,fill=black!100]at (-4.5,-1) {}; 
\end{tikzpicture}
\end{center}
 \caption[]{Illustration of a cluster and a typical user. The cluster center is at origin, $O$ and the typical user is at  distance $r_u$ from the origin.   The closest  BSs to this typical user is at a distance  $r_b$.}
 \label{clust_anal_setup}
\end{figure}
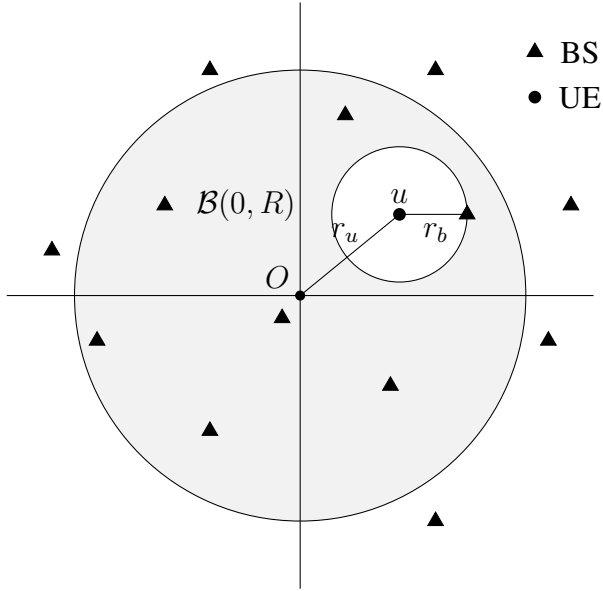
Let
$ \Lambda(\rho,T)=\,_2F_1\left(1,\frac{\alpha-2}{\alpha };2-\frac{2}{\alpha
};-\rho^\alpha T \right),$
where $\,_2F_1(a,b,c,x)$ is the standard hypergeometric function.
\begin{lemma}
The coverage probability of a typical user is lower bounded by
\begin{align}
P_{c, R}(T)\geq & \E_{r_u,r_b}[P_1(T)\i(r_b<R-r_u)]\\\nonumber
&+\E_{r_u,r_b}[P_2(T)\i(R-r_u<r_b<R+r_u)],
\end{align}
where $P_1(T)$ and $P_2(T)$ are given by
\begin{align}
P_1(T) &= e^{-T r_b^\alpha\sigma^2}e^{-\lambda \int_{0}^{2\pi}\frac{T r_b^\alpha(R^\prime)^{2-\alpha } \, \Lambda(r_b/R^\prime,T)}{\alpha -2} \d \varphi},
\label{eq:Full_Clut_Cov_T1}
\end{align}
and
\begin{align}
 P_2(T)&=  e^{-T r_b^\alpha\sigma^2} e^{-\lambda 2 \Theta  \frac{ T r_b^2 \, \Lambda(1,T)}{\alpha -2}} e^{-\lambda\int_{-\Theta}^{\Theta}\frac{T r_b^\alpha \left({R^\prime}\right){}^{2-\alpha } \,
        \Lambda(r_b/R^\prime,T)}{\alpha -2}\d \varphi}.
\label{eq:Full_Clut_Cov_T2}
 \end{align}
Here $R^\prime=\sqrt{\cos^2(\varphi) r_u^2+R^2}-\cos(\varphi) r_u$   and  $\Theta=\pi-\arccos\left[ \frac{r_b^2+r_u^2-R^2}{2 r_b r_u}\right]$. The PDF of $r_b$ and $r_u$ are given in Table \ref{tab:dist_distri}.
 \label{the:cov}
\end{lemma}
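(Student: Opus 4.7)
The plan is to reduce the conditional coverage probability to a standard Laplace-transform computation and then split the geometry of the admissible interferer region into the two regimes that distinguish $P_1(T)$ and $P_2(T)$. First I would invoke the diagonal approximation $l_{ii}\approx h_{ii}r_{ii}^{-\alpha/2}$ motivated just above the lemma, so that $|h_{ii}|^2\sim\mathrm{Exp}(1)$ and the conditional coverage probability becomes
\begin{align*}
\P(\sinr_i>T\mid r_u,r_b,\Phi_b^\prime)
&=\exp(-Tr_b^\alpha\sigma^2)\,\exp(-Tr_b^\alpha\I_i(\Phi)).
\end{align*}
Averaging the second factor over the Rayleigh fades of the interferers and then over the PPP through its PGFL yields $\exp\!\bigl(-\lambda\int_{A}\tfrac{Tr_b^\alpha \|x-u\|^{-\alpha}}{1+Tr_b^\alpha \|x-u\|^{-\alpha}}\,\mathrm{d}x\bigr)$, where $A$ is the admissible interferer region.

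Next I would pin down $A$ geometrically: since the interferers lie outside the cluster disc $\mathcal{B}(O,R)$ and, conditional on the serving distance $r_b$, no point of $\Phi_b$ can lie inside $\mathcal{B}(u,r_b)$, we have $A=\R^2\setminus\bigl(\mathcal{B}(O,R)\cup\mathcal{B}(u,r_b)\bigr)$. In polar coordinates centred at the user, with $\rho_+(\varphi)$ denoting the distance to the cluster boundary in direction $\varphi$, the radial integration starts at $\max\{r_b,\rho_+(\varphi)\}$, which separates the analysis into two regimes. When $r_b<R-r_u$ the ball $\mathcal{B}(u,r_b)$ sits entirely inside the cluster, so $\rho_+(\varphi)\ge r_b$ for every $\varphi$ and a single full-circle integral gives $P_1(T)$. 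When $R-r_u<r_b<R+r_u$ the ball $\mathcal{B}(u,r_b)$ pokes through the cluster boundary in a sector of half-angle $\Theta$ obtained by solving $\rho_+(\varphi)=r_b$; the law of cosines on the triangle with sides $r_u$, $r_b$, $R$ recovers the stated $\Theta=\pi-\arccos\!\bigl(\tfrac{r_b^2+r_u^2-R^2}{2r_br_u}\bigr)$. Inside that sector the binding lower limit is $r_b$, yielding the constant contribution $2\Theta\cdot\tfrac{Tr_b^2}{\alpha-2}\Lambda(1,T)$; outside the sector the limit reverts to $\rho_+(\varphi)=R^\prime$, producing the angular integral in $P_2(T)$.

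Finally, the inner $\rho$-integration is the identity
\begin{equation*}
\int_a^\infty\frac{s\rho^{1-\alpha}}{1+s\rho^{-\alpha}}\,\mathrm{d}\rho=\frac{sa^{2-\alpha}}{\alpha-2}\,{}_2F_1\!\Bigl(1,\tfrac{\alpha-2}{\alpha};\,2-\tfrac{2}{\alpha};\,-\tfrac{s}{a^\alpha}\Bigr),
\end{equation*}
obtained by the substitution $t=\rho^\alpha/s$ and the Euler integral representation of ${}_2F_1$. Setting $s=Tr_b^\alpha$ with $a\in\{r_b,R^\prime\}$ reproduces the $\Lambda(1,T)$ and $\Lambda(r_b/R^\prime,T)$ factors exactly, and averaging against the densities of $r_b$ and $r_u$ in Table~\ref{tab:dist_distri}, glued by the indicators $\i(r_b<R-r_u)$ and $\i(R-r_u<r_b<R+r_u)$, closes the bound. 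The hardest step will be the Case-2 geometry: correctly identifying the angular sector in which $r_b$ rather than $\rho_+(\varphi)$ is the binding constraint and accounting for the omitted regime $r_b>R+r_u$. That omission, together with the incircle/circumcircle approximation noted in the footnote, is precisely why the expression is a lower bound rather than an equality.
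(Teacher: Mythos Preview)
Your proposal is correct and follows essentially the same route as the paper: condition on $(r_u,r_b)$, use the exponential law of $|h|^2$ together with the PGFL of the PPP to reduce to a Laplace transform of the out-of-cluster interference, split the geometry into the Type-I case $r_b<R-r_u$ and the Type-II case $R-r_u<r_b<R+r_u$ according to whether $\mathcal{B}(u,r_b)$ protrudes through the cluster boundary, and evaluate the radial integral via the ${}_2F_1$ identity. One small correction: the inequality in the lemma comes only from discarding the Type-III event (nearest BS outside the disc, essentially $r_b>R+r_u$, which the paper phrases as ``no BS inside $C_0$''); the incircle/circumcircle footnote you invoke concerns the separate issue of approximating the square cluster by a disc and is not the source of the lower bound here.
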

\begin{proof}
See Appendix \ref{sec:app_clust_full}.
\end{proof}
The inequality in the above theorem results by neglecting the cases when there are no BSs inside the cluster $C_0$ and the probability of this event decreases to zero with increasing BS density or cluster radius. 
The integrals in \eqref{eq:Full_Clut_Cov_T1} and \eqref{eq:Full_Clut_Cov_T2} can be evaluated using a M-quadrant approximation by dividing the interfering area into $2\pi/M$ angular regions \cite{STVKKRK_ICC15} as follows: 
\begin{align*}
\int_{0}^{2\pi}\frac{T r_b^\alpha(R^\prime)^{2-\alpha } \, \Lambda(r_b/R^\prime,T)}{\alpha -2} \d \varphi&\approx\sum_{n=0}^{M-1}\frac{2\pi}{M}\frac{T r_b^\alpha {R^\prime_n}^{2-\alpha } \, \Lambda(r_b/R_n^\prime,T)}{\alpha -2},
\end{align*} where, $R_n^\prime=\sqrt{\cos ^2(\theta_n ) r_u^2+R^2}-\cos (\theta_n ) r_u$. This approximation can be used for numerical evaluation. In this paper, we use $M=128$ for numerical evaluation. 
We now numerically evaluate the $\snr$ distribution given in Lemma \ref{the:cov} and compare with Monte-Carlo simulations. We choose $\alpha =4$, and  the noise variance $\sigma^2$ is chosen so that the average $\snr$ at a receiver at a distance $200$ m is $10$ dB. The figures are  also parameterized by the average inter BS distance $D$ and equals  $D=2/\sqrt{\pi\lambda}$, where $\lambda$ is the BS density.  
\begin{figure}[ht]
\centering
\begin{tikzpicture}
\begin{semilogyaxis}[scale=1.35,
grid = both,
legend style={
 cells={anchor=west},
legend pos=south west ,
 },
xlabel = $\sinr$ threshold $T$,
ylabel = Coverage probability
]
\addplot
coordinates{
(-5.000000, 0.774519) (-2.000000, 0.650059) (1.000000, 0.511849) (4.000000, 0.383225) (7.000000, 0.278284) (10.000000, 0.199064) (13.000000, 0.141486) (16.000000, 0.100311) (19.000000, 0.071053)
};\addlegendentry{No cloud}
\addplot
coordinates{
(-5.000000, 0.916143) (-2.000000, 0.860017) (1.000000, 0.784771) (4.000000, 0.695227) (7.000000, 0.598151) (10.000000, 0.499965) (13.000000, 0.406278) (16.000000, 0.321577) (19.000000, 0.248691)
 };
\addlegendentry{$(\frac{1}{2},200)$}
\addplot
coordinates{
(-5.000000, 0.818357) (-2.000000, 0.721495) (1.000000, 0.610874) (4.000000, 0.499807) (7.000000, 0.398132) (10.000000, 0.310441) (13.000000, 0.237774) (16.000000, 0.179361) (19.000000, 0.133569)
 };\addlegendentry{$(\frac{1}{2},400)$}
\addplot
coordinates{
(-5.000000, 0.941403) (-2.000000, 0.900167) (1.000000, 0.842180) (4.000000, 0.768705) (7.000000, 0.682858) (10.000000, 0.588852) (13.000000, 0.492173) (16.000000, 0.398962) (19.000000, 0.314507)
 };
\addlegendentry{$(1,200)$}
\addplot
coordinates{
(-5.000000, 0.851752) (-2.000000, 0.767501) (1.000000, 0.666337) (4.000000, 0.558962) (7.000000, 0.455159) (10.000000, 0.361243) (13.000000, 0.280388) (16.000000, 0.213522) (19.000000, 0.160045)
 };\addlegendentry{$(1,400)$}
\addplot
coordinates{
(-5.000000, 0.981727) (-2.000000, 0.967850) (1.000000, 0.946569) (4.000000, 0.915749) (7.000000, 0.872377) (10.000000, 0.812924) (13.000000, 0.735608) (16.000000, 0.642859) (19.000000, 0.541520)
 };\addlegendentry{$(10,200)$}
 \addplot
 coordinates{
 (-5.000000, 0.993711) (-2.000000, 0.987676) (1.000000, 0.976229) (4.000000, 0.955386) (7.000000, 0.919783) (10.000000, 0.864126) (13.000000, 0.786060) (16.000000, 0.688679) (19.000000, 0.580302)
  };\addlegendentry{Ideal}

 \addplot[
    mark =oplus,  mark size=3, only marks]
coordinates{
(-5.000000, 0.961682) (-2.000000, 0.927102) (1.000000, 0.874274) (4.000000, 0.802475) (7.000000, 0.710449) (10.000000, 0.606615) (13.000000, 0.493923) (16.000000, 0.386067) (19.000000, 0.292283)
 };
\addlegendentry{MC-Sim $(1,200)$}
\addplot[
   mark=triangle*, mark options={fill=white},  mark size=3, only marks]
coordinates{
(-5.000000, 0.874826) (-2.000000, 0.827668) (1.000000, 0.760821) (4.000000, 0.676865) (7.000000, 0.580081) (10.000000, 0.477657) (13.000000, 0.379215) (16.000000, 0.292804) (19.000000, 0.219046)
 };\addlegendentry{MC-Rect$(\frac{1}{2},200)$}
 \end{semilogyaxis}
\end{tikzpicture}
\caption[]{Coverage probability for various combinations of cluster area and inter base station distance $(A, D)$ where $A=\pi R^2$ is in km${}^2$ and $D$ in m. The conventional cellular networks with no cooperation  is denoted by   "\textit{No cloud}"  and   "\textit{Ideal}" is the scenario in which  the entire network is cooperating, \ie, $R=\infty$. The Monte-Carlo simulation for $(A,D) = (1,200)$ is marked  by $\oplus$. MC-Rect  corresponds to a Monte-Carlo simulation with a  square  cell  $C_0$  of   area  $A$  instead of the circular approximation.  }
\label{fig:CovClust}
\end{figure}
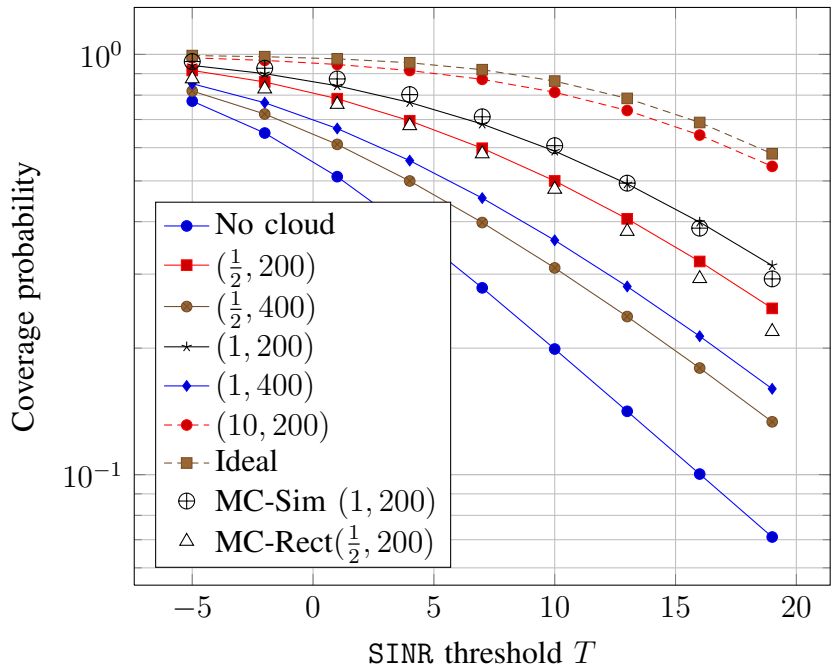

In Figure \ref{fig:CovClust}, the coverage probability given in Lemma \ref{the:cov} is plotted as a function of the $\sinr$ threshold $T$ for various cluster sizes  $A= \pi R^2$ and the inter-BS distances $D$. Also the exact Monte-Carlo simulations are plotted for some configurations. The Monte-Carlo simulations utilize  a rectangular tessellation for BS cooperation. We observe that the coverage probability obtained in Lemma \ref{the:cov} matches the Monte-Carlo simulation closely. This observation also validates our assumption $l_{ii}\approx h_{ii} r_{ii}^{-\alpha/2}$.
  As expected, the coverage increases with increasing cluster radius when the BS density (or $D$) is constant. Also, for a given cluster radius, the coverage probability increases with decreasing $D$ (or increasing density). This is because, with cloud clustering, the interferers inside a cluster will be completely eliminated and most of the users in the cluster are associated to  a closer BS inside the cluster and hence have a higher signal strength. This is in contrast to the performance of a conventional cellular network where the coverage does not change with density \cite{ganti_coverage}.

\begin{table} [ h b t]
\setlength{\tabcolsep}{4pt}
\renewcommand{\arraystretch}{1}
\centering
\begin{subtable}{.4\textwidth}
\centering
\begin{tabular}{l c c c c c}
  \hline
$A$&$\bar{N}$& 5 \% & 10 \% & 50 \%& Mean\\
\hline
Cell&-&0.09&0.15&1.23&2.14\\
1/2&4& 0.21 & 0.40 & 2.37 &  3.41 \\
1&8& 0.29 & 0.54 & 3.00 &  3.86 \\
2&16& 0.41 & 0.72 & 3.65 &  4.43 \\
10&80& 0.80 & 1.49 & 5.69 &  6.17 \\
Ideal&$\infty$& 2.48 & 3.54 & 7.74 & 8.22 \\
\hline
\end{tabular}
\label{tab:rcdf_lim_400}
\caption{$D=400$}
\end{subtable}
\begin{subtable}{.4\textwidth}
\centering
\begin{tabular}{l c c c c c}
  \hline
$A$&$\bar{N}$& 5 \% & 10 \% & 50 \%& Mean\\
\hline
Cell&-&0.09&0.15&1.23&2.14\\
1/8&4& 0.20 & 0.39 & 2.44 &  3.43 \\
1/4&8& 0.28 & 0.51 & 2.98 &  3.87 \\
1/2&16& 0.37 & 0.70 & 3.64 &  4.41 \\
2.5&80& 0.78 & 1.46 & 5.57 &  6.05 \\
Ideal&$\infty$& 2.48 & 3.54 & 7.74 & 8.22 \\
\hline
\end{tabular}
\label{tab:rcdf_lim_200_2}
\caption{$D=200$}
\end{subtable}
\hfill
\caption{Rate profile of a typical user for various combinations of cluster area and inter BS distance, where $A=\pi R^2$ is in km${}^2$ and $D$ in m. $N_{avg}$ is the average number of BSs in a cluster. Rates are in bits/sec/Hz.}
\label{tab:rcdf_clust}
\end{table}

The cumulative distribution function (CDF) of rate, $C=\log_2(1+\sinr)$, for a user can be obtained from the coverage probability as $F_C(t)=\mathbb{P}[\log_2(1+\sinr)<t]=1-P_{cR}(2^t-1)$ where $t$ is the rate in bits/sec/Hz.
The rate profile is provided in  Table \ref{tab:rcdf_clust}, for various $R$ and ~$D=200,\, 400$.   Compared to the conventional cellular system,  the $5\%$  percentile point (corresponds to edge users)  increases by  $300\%$, while the mean rate increases  by $200\%$  when  $16$ BSs  cooperate in a cluster of area $ 1/2$ km$^2$. We also observe that the mean rate increases faster to the ideal case of complete cooperation than the $5\%$ rate, indicating the disparity between the edge and the interior users in a cluster.

\section{Ideal clouds and Minimum Clustering Area with $\epsilon$-penalty}\label{sec:Ideal}
Ideal cloud is an hypothetical network in which all the nodes cooperate and eliminate co-channel interference.  It is easy to observe that   the performance of an ideal cloud  provides upper bounds on the achievable performance of a network with limited cooperation. 
In this section, we look at how the cluster radius  $R$ should be chosen so that the coverage probability (a.k.a. $\sinr$ distribution) is close to the coverage probability of an ideal cloud.
\subsection{Coverage probability of Ideal clouds}
Since there is no interference, the coverage probability equals the probability that the  received signal-power-to-noise ratio is larger than a threshold  and is given by \cite{spcomKuchi14},
\begin{align}
  {P}_{c,\infty}(T)=2 \pi  \lambda_b \int_{0}^\infty r e^{-T\sigma ^2
r^{\alpha}-\pi \lambda_b  r^2}\mathbf{d}r.
\end{align}
This can be obtained from Lemma \ref{the:cov}, by noting that when $R=\infty$, there will be only type-I users (see Appendix \ref{sec:app_clust_full}). 
\subsection{ Minimum clustering area with $\epsilon$-penalty}
 When all the BSs are cooperating, the network becomes noise limited instead of being interference limited. In this case, the throughput  can be increased   by merely  increasing the transmit power or increasing the density of the BS as shown in Figure \ref{fig:dpc_tic}.
However, due to practical limitations, only the BSs in  a small area can cooperate  using the  cloud infrastructure. In particular, all the BS in the cluster $C_0 =B(O,R)$ cooperate and precoding removes  the intra-cluster interference.

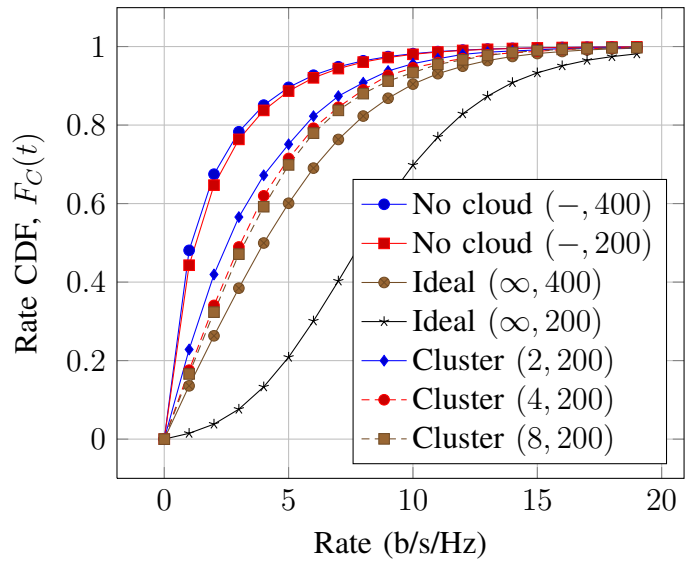
\begin{figure}[h]
\centering
\begin{tikzpicture}
\begin{axis}[scale=1.1,
grid = both,
legend style={
 cells={anchor=west},
legend pos=south east ,
 },
xlabel ={Rate (b/s/Hz)},
ylabel = {Rate CDF, $F_C(t)$}
]
 \addplot
 coordinates{
 (0.000000, -0.000000) (1.000000, 0.480885) (2.000000, 0.674924) (3.000000, 0.782970) (4.000000, 0.850860) (5.000000, 0.896055) (6.000000, 0.927037) (7.000000, 0.948599) (8.000000, 0.963723) (9.000000, 0.974372) (10.000000, 0.981887) (11.000000, 0.987195) (12.000000, 0.990947) (13.000000, 0.993599) (14.000000, 0.995474) (15.000000, 0.996800) (16.000000, 0.997737) (17.000000, 0.998400) (18.000000, 0.998868) (19.000000, 0.999200)
  };\addlegendentry{No cloud $(-,400)$ }
\addplot
coordinates{
(0.000000, 0.000000) (1.000000, 0.443025) (2.000000, 0.646995) (3.000000, 0.763569) (4.000000, 0.837368) (5.000000, 0.886615) (6.000000, 0.920402) (7.000000, 0.943922) (8.000000, 0.960421) (9.000000, 0.972040) (10.000000, 0.980239) (11.000000, 0.986030) (12.000000, 0.990123) (13.000000, 0.993016) (14.000000, 0.995062) (15.000000, 0.996508) (16.000000, 0.997531) (17.000000, 0.998254) (18.000000, 0.998766) (19.000000, 0.999127)
 };\addlegendentry{No cloud $(-,200)$}

 \addplot
 coordinates{
 (0.000000, 0.000000) (1.000000, 0.135718) (2.000000, 0.263166) (3.000000, 0.384522) (4.000000, 0.499580) (5.000000, 0.600882) (6.000000, 0.690359) (7.000000, 0.763338) (8.000000, 0.822934) (9.000000, 0.868334) (10.000000, 0.904307) (11.000000, 0.930935) (12.000000, 0.949900) (13.000000, 0.964358) (14.000000, 0.974750) (15.000000, 0.982234) (16.000000, 0.987403) (17.000000, 0.991386) (18.000000, 0.994073) (19.000000, 0.995851)
  };\addlegendentry{Ideal $(\infty,400)$}
  \addplot
  coordinates{
  (0.000000, 0.000000) (1.000000, 0.014755) (2.000000, 0.038209) (3.000000, 0.076256) (4.000000, 0.133119) (5.000000, 0.208602) (6.000000, 0.300966) (7.000000, 0.402738) (8.000000, 0.508892) (9.000000, 0.611405) (10.000000, 0.698164) (11.000000, 0.769523) (12.000000, 0.828769) (13.000000, 0.873744) (14.000000, 0.908344) (15.000000, 0.933086) (16.000000, 0.951063) (17.000000, 0.965013) (18.000000, 0.974388) (19.000000, 0.981443)
   };\addlegendentry{Ideal $(\infty,200)$}
\addplot
  coordinates{
  (0.000000, 0.000000) (1.000000, 0.228376) (2.000000, 0.419415) (3.000000, 0.565650) (4.000000, 0.672060) (5.000000, 0.751089) (6.000000, 0.822651) (7.000000, 0.873678) (8.000000, 0.907903) (9.000000, 0.937772) (10.000000, 0.956441) (11.000000, 0.968886) (12.000000, 0.980087) (13.000000, 0.985688) (14.000000, 0.988799) (15.000000, 0.991288) (16.000000, 0.995022) (17.000000, 0.995644) (18.000000, 0.996889) (19.000000, 0.996889)
   };\addlegendentry{Cluster $(2,200)$}
 \addplot
 coordinates{
 (0.000000, 0.000000) (1.000000, 0.175663) (2.000000, 0.340653) (3.000000, 0.489478) (4.000000, 0.619701) (5.000000, 0.714852) (6.000000, 0.792010) (7.000000, 0.845380) (8.000000, 0.889600) (9.000000, 0.926502) (10.000000, 0.945715) (11.000000, 0.959134) (12.000000, 0.969198) (13.000000, 0.979262) (14.000000, 0.983837) (15.000000, 0.987801) (16.000000, 0.992376) (17.000000, 0.994815) (18.000000, 0.996035) (19.000000, 0.998170)
  };\addlegendentry{Cluster $(4,200)$}

   \addplot
       coordinates{
       (0.000000, 0.000000) (1.000000, 0.165985) (2.000000, 0.323474) (3.000000, 0.471208) (4.000000, 0.592039) (5.000000, 0.698553) (6.000000, 0.779421) (7.000000, 0.837476) (8.000000, 0.880113) (9.000000, 0.912052) (10.000000, 0.933763) (11.000000, 0.954688) (12.000000, 0.966174) (13.000000, 0.977502) (14.000000, 0.984896) (15.000000, 0.989616) (16.000000, 0.992763) (17.000000, 0.994808) (18.000000, 0.996539) (19.000000, 0.997011)
        };\addlegendentry{Cluster $(8,200)$}
 \end{axis}
\end{tikzpicture}
\caption[Rate distribution of cloud network with average inter BS distance $400$m and $200$m.]{Rate distribution of cloud network with average inter BS distance $D=400$m and $200$m.}
		\label{fig:dpc_tic}
\end{figure}
Let $y=(R(1-\delta),0)$ denote the location of the user at a distance $(1-\delta)R$ from the center of the cluster. A small $\delta$ implies that the user is close to the cluster boundary and $\delta$ close to one implies an interior user. 
We now will evaluate the cluster size $R$ so that that the coverage probability with finite cluster radius $R$ equals $(1-\epsilon)P_{c, \infty}(T)$, \ie,
\[R^*(\epsilon, \delta) = \argmin_R\{P_{c,R}(T) =(1-\epsilon)P_{c, \infty}(T)\},\]
where
\[P_{c,R}(T)=\P\left(\frac{ |h|^2 r^{-\alpha}}{\sigma^2+ \I_i(\Phi)}\geq T\right) .\]
\begin{lemma}\label{lem:scaling}
The optimal scaling of the BS cluster radius is 
\begin{equation}
R^*(\epsilon, \delta) \sim\left( \lambda  T \Delta(\delta)\eta(T) \epsilon^{-1}\right)^{1/(\alpha-2)}, \quad \epsilon \to 0,
\end{equation}
where $\eta(T) =\E[\exp(-T\sigma^2 r^\alpha)r^\alpha]/\E[\exp(-T\sigma^2 r^\alpha) ]$ and  $\Delta(\delta) =\int_{B(o,1)^c}  \frac{1}{\|z-(1-\delta, 0)\|^{\alpha}}\d z$.
\label{lem:345}
\end{lemma}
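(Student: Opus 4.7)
The plan is to expand $P_{c,R}(T)$ as an asymptotic series in the small parameter $R^{2-\alpha}$ (equivalently, in $\epsilon$) and match leading orders.

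First, condition on $r$, the distance to the tagged BS. Using the Rayleigh fading assumption on $|h|^2$, the standard argument gives
\begin{align}
P_{c,R}(T)=\E_{r}\Bigl[e^{-T\sigma^{2} r^{\alpha}}\,\L_{I}(T r^{\alpha})\Bigr],
\end{align}
where $\L_{I}(s)=\E[e^{-s I_i(\Phi)}]$. Because $\Phi_{b}^{\prime}$ is the restriction of a PPP of intensity $\lambda$ to the exterior of the cluster disc $B(o,R)$, and the user sits at $y=(R(1-\delta),0)$, the Laplace transform of the (Rayleigh-faded) interference at $y$ evaluates to
\begin{align}
\L_{I}(s)=\exp\!\Bigl(-\lambda\!\int_{B(o,R)^{c}}\!\frac{s\,\|z-y\|^{-\alpha}}{1+s\,\|z-y\|^{-\alpha}}\,\d z\Bigr).
\end{align}

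Next I would push $R\to\infty$ (equivalently $\epsilon\to0$). For every interferer $z\in B(o,R)^{c}$ the distance $\|z-y\|$ is of order $R$, so $s\|z-y\|^{-\alpha}=T r^{\alpha}\|z-y\|^{-\alpha}\to 0$ and the integrand may be replaced by $s\|z-y\|^{-\alpha}$ with a relative error of order $R^{-\alpha}$. Rescaling by $z=R z^{\prime}$ yields
\begin{align}
\int_{B(o,R)^{c}}\!\|z-y\|^{-\alpha}\,\d z=R^{\,2-\alpha}\!\int_{B(o,1)^{c}}\!\|z^{\prime}-(1-\delta,0)\|^{-\alpha}\,\d z^{\prime}=R^{\,2-\alpha}\Delta(\delta).
\end{align}
Therefore $\L_{I}(T r^{\alpha})=\exp\bigl(-\lambda\, T\, r^{\alpha}\,R^{\,2-\alpha}\,\Delta(\delta)\bigr)(1+o(1))$, and since the argument of this exponential tends to zero, a further expansion $e^{-x}=1-x+O(x^{2})$ gives
\begin{align}
P_{c,R}(T)=\E\bigl[e^{-T\sigma^{2}r^{\alpha}}\bigr]-\lambda\,T\,R^{\,2-\alpha}\,\Delta(\delta)\,\E\bigl[e^{-T\sigma^{2}r^{\alpha}}r^{\alpha}\bigr]+o(R^{\,2-\alpha}).
\end{align}
The first term is exactly $P_{c,\infty}(T)$, so the coverage gap is
\begin{align}
P_{c,\infty}(T)-P_{c,R}(T)=\lambda\,T\,\Delta(\delta)\,\eta(T)\,P_{c,\infty}(T)\,R^{\,2-\alpha}+o(R^{\,2-\alpha}),
\end{align}
with $\eta(T)$ exactly the ratio appearing in the statement.

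Finally, imposing the defining equation $P_{c,R^{*}}(T)=(1-\epsilon)P_{c,\infty}(T)$ forces $\epsilon=\lambda T\Delta(\delta)\eta(T)(R^{*})^{2-\alpha}+o((R^{*})^{2-\alpha})$, and inverting gives $R^{*}(\epsilon,\delta)\sim(\lambda T\Delta(\delta)\eta(T)\,\epsilon^{-1})^{1/(\alpha-2)}$. The step I expect to be the main obstacle is rigorously bounding the remainder: one must check uniform integrability in $r$ of both expansions (the linearization $s/(1+s)\mapsto s$ inside the integral, and the subsequent $e^{-x}\mapsto 1-x$), since $r^{\alpha}$ is unbounded under the Rayleigh-nearest-BS distribution. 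The factor $e^{-T\sigma^{2}r^{\alpha}}$ in the integrand together with the $\lambda$-exponential tail $f_{r}(r)=2\pi\lambda r e^{-\lambda\pi r^{2}}$ suppress both the large-$r$ contribution and the cross term of order $R^{2(2-\alpha)}$, so the dominated-convergence step goes through; confirming this rigorously—and verifying that $\Delta(\delta)$ is indeed finite for $\delta\in(0,1)$ since $\alpha>2$—completes the argument.
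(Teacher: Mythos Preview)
Your proposal is correct and follows essentially the same route as the paper: condition on $r$, use the PGFL of the PPP restricted to $B(o,R)^c$ to compute $\L_I$, rescale $z\mapsto Rz'$ to extract the factor $R^{2-\alpha}\Delta(\delta)$, linearize $e^{-x}\sim 1-x$, and invert the defining relation $P_{c,R^*}=(1-\epsilon)P_{c,\infty}$. The paper's argument is in fact slightly more heuristic than yours about the remainder control; your remarks on uniform integrability via the $e^{-T\sigma^2 r^\alpha}$ and $e^{-\pi\lambda r^2}$ factors, and on the finiteness of $\Delta(\delta)$ for $\delta\in(0,1)$, make explicit what the paper leaves implicit.
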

\begin{proof}
	See Appendix \ref{sec:app_scaling}.
\end{proof}

From Lemma \ref{lem:345}, we observe that  for the coverage to be $1-\epsilon$ close to the ideal cloud, $R^*(\epsilon, \delta)$ should scale as $\epsilon^{-1/(\alpha-2)}$ and tends to infinity when $\alpha \to 2$.  It is easy to see that $\Delta(\delta) \to \infty$ as $\delta \to 1$ and hence $R^*(\epsilon, \delta)$ increases to infinity. This is intuitive since the coverage probability of a user at the boundary of a cooperating cluster always sees some interference. 

When $\sigma^2 =0$, \ie, the noise is ignored,  $\eta(T)= \pi ^{-\alpha /2} \lambda ^{-\alpha /2} \Gamma \left(\frac{\alpha }{2}+1\right)$.  Hence from Lemma \ref{lem:345}, we have that the optimal scaling should be 
\[R^*(\epsilon, \delta) \sim\left(T \Delta(\delta)\pi ^{-\alpha /2} \lambda ^{(2-\alpha) /2} \Gamma \left(\frac{\alpha }{2}+1\right) \epsilon^{-1}\right)^{1/(\alpha-2)},\]
which implies that the average number of nodes in a cooperating cluster should scale as
\[\E[\Phi( B(o,R^*(\epsilon, \delta)))] =\lambda \pi R^*(\epsilon, \delta)^2\sim \left(T \Delta(\delta)\pi ^{-1}  \Gamma \left(\frac{\alpha }{2}+1\right) \epsilon^{-1}\right)^{2/(\alpha-2)}. \]
The above equation shows that the performance is determined by the average number of nodes in a cluster rather than the cluster radius.

\section{Clustering with limited Channel knowledge}\label{sec:clust_pcsi}
As the size of the cluster, $m = |C_0\cap\Phi_b|$,  increases, the amount of CSI required for precoding increases as ${O(m^2)}$ as the  size of $\H_m$ is $m\times m$. For instance, in a cluster with ${m=2}$, the number of individual channel states required is 4, whereas this number increases to 400 for a cloud with 20 BSs. Obtaining such large amount of  CSI in a cellular network is practically infeasible. Therefore, there arises a need to explore the effect of partial CSI on the achievable rate using ZF-DPC. In current cellular networks, the  channel states of at most six best BSs can be estimated  \cite{garg199995}. Next, we discuss  the performance of ZF-DPC  with limited channel feedback. 
\subsection{ZF-DPC with partial CSI}
 We assume that  every user can report the  channel states of its nearest $L\leq m$ BSs of the cluster.
Let the channel matrix with partial CSI from the $m$ users  be denoted by $\H_p \in \R^{m\times m}$  with  the unknown channels  assumed to be zero. Essentially, each row of $\H_p$ has at most $L$ non zero entries.  Let $\H_p=\Lb_p\Q_p$ be the LQ factorization of the channel matrix of $\H_p$.  Hence,  the actual channel matrix denoted by $\H_m$ equals $\H_m=\H_p+\H_r$, where $\H_r \in \R^{m\times m}$ is the matrix consisting of all the unknown channels (not fed back by the users to the central processing node). Using ${\W_m =\Q_p^\dagger}$ as the precoding matrix and the channel being $\H_m$, we obtain,
\begin{align}
\Y_m&=\left(\H_p+\H_r\right)\Q_p^\dagger \X_m +\I_i(\phi)+\No_m\nonumber\\ 
&=\Lb_p \X_m + \underbrace{\H_r\Q_p^\dagger\X_m}_{\Eb_m}+\I_i(\phi)+\No_m.
\end{align}
By using DPC we can cancel interference from known channels, \ie,  the terms in the lower part of $\Lb_p$.  Hence the interference of the BS whose channels are known can be canceled.  However, the interference from the intra cluster BSs whose CSI is unknown leads to the interference term $\Eb_m$, thus degrading the overall performance. 
\subsection{Coverage probability}
Consider a cluster around the origin, $D_c=\B(0,R)$ and a user $u$ at a distance $r_u$ from origin which is tagged to a BS at a distance $r_b$ from it. After  the  interference from nearest $L$ BSs, which is at a distance $r_l$ from the user, is cancelled, the  $\sinr$  is      \begin{align}
    \sinr&=\dfrac{|h_{b}|^2r_{b}^{-\alpha}}{\sigma^2+\sum_{y\in\Phi_b^\prime}|h_y|^2 \|y-u \|^{-\alpha}},
     \end{align}
     where $\Phi_b^\prime=\Phi_b\setminus\{x_1,x_2,\hdots, x_L\}$  and $x_1,x_2,\hdots, x_L\in D_c$ are the $L$ nearest BSs   from the user $u$, and inside the cluster $C_o$. 

\begin{lemma}
The coverage probability of a typical user in a cloud radio with partial channel feedback is 
\begin{align}
P_{c, R}(T)\geq & \E_{r_u,r_b,r_l}[P_1(T)\i(r_b<R-r_u,r_b<r_l<R-r_u)]\\\nonumber
&+\E_{r_u,r_b,r_l}[P_2(T)\i(r_b<R-r_u,R-r_u<r_l<R+r_u)]\\\nonumber
&+\E_{r_u,r_b,r_l}[P_3(T)\i(R-r_u<r_b<R+r_u,r_b\leq r_l \leq R+r_u)]
\end{align}
 where 
\begin{align*}
P_1(T) &= e^{-T r_b^\alpha\sigma^2}e^{-2\pi\lambda\frac{Tr_b^\alpha r_l^{2-\alpha } \, _ \Lambda(r_b/r_l,T)}{\alpha -2}},
\end{align*}
\begin{align*}
 P_2(T)&=  e^{-T r_b^\alpha\sigma^2} e^{-\lambda \int_{-\Theta}^{\Theta}\frac{T r_b^\alpha \left({R^\prime}\right){}^{2-\alpha } \,
      \Lambda(r_b/R^\prime,T)}{\alpha -2}\d \varphi}e^{ -\lambda 2(\pi-\Theta)\frac{T r_b^{\alpha } r_l^{2-\alpha } \,
      \Lambda(r_b/r_l,T)}{\alpha -2} },
 \end{align*}
  and
\begin{align*}
 P_3(T)&=  e^{-T r_b^\alpha\sigma^2}e^{-2\lambda\int_{\Theta_1}^{\Theta_2}\frac{Tr_b^\alpha \left({R^\prime}\right){}^{2-\alpha } \,
       \Lambda(r_b/R^\prime,T)}{\alpha -2}\d \varphi} e^{ -\lambda 2\Theta_1\frac{ T r_b^2 \, \Lambda(1,T)}{\alpha -2}}e^{ -\lambda\int_{\Theta_2}^{2\pi-\Theta_2}\frac{ Tr_b^\alpha r_l^{2-\alpha } \, \Lambda(r_b/r_l,T)}{\alpha -2}\d \varphi}.
\end{align*}
Here $\Theta=\pi-\arccos\left[ \frac{r_l^2+r_u^2-R^2}{2 r_l r_u}\right]$ and $R^\prime=\sqrt{\cos(\varphi)^2 r_u^2-r_u^2+R^2}-\cos(\varphi) r_u$. Also $\Theta_1=\pi-\arccos\left[ \frac{r_b^2+r_u^2-R^2}{2 r_b r_u}\right]$ and $\Theta_2=\pi-\arccos\left[ \frac{r_l^2+r_u^2-R^2}{2 r_l r_u}\right]$. 
The pdf of $r_l$ conditioned on $r_b$ is given by
$\mathbf{f}_{r_l|r_b}(r_l|r_b)= \frac{2(\pi\lambda)^{L}}{(L-1)!}
\left(r_l^2-r_b^2\right)^{L-1}r_l
e^{-\pi\lambda(r_l^2-r_1^2)}$. \label{the:cov_lim}
\end{lemma}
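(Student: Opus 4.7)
The plan is to follow the same conditioning-and-geometric-decomposition strategy as in the proof of Lemma~\ref{the:cov}, but now with the additional random variable $r_l$ tracked through the calculation. First I would condition on $(r_u, r_b, r_l)$. Given the direct-link fading $|h_b|^2 \sim \mathrm{Exp}(1)$, the conditional coverage probability factors as $e^{-T r_b^\alpha \sigma^2}\, \mathcal{L}_{\mathcal{I}}(T r_b^\alpha)$, where $\mathcal{I}$ is the aggregate interference from all BSs that are not cancelled and $\mathcal{L}_{\mathcal{I}}$ denotes its Laplace transform. Because ZF--DPC with partial CSI cancels only the $L$ nearest in-cluster BSs of the user, the set of active interferers is a PPP of intensity $\lambda$ on $\mathcal{R} = \R^2 \setminus \bigl(D_c \cap \B(u, r_l)\bigr)$: intra-cluster BSs beyond the $L$-th nearest and all out-of-cluster BSs contribute. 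Standard PGFL arguments combined with Rayleigh fading reduce $\mathcal{L}_{\mathcal{I}}(s)$ to $\exp\bigl(-\lambda \int\!\!\int_{\mathcal{R}} \frac{s r^{-\alpha}}{1 + s r^{-\alpha}}\, r\, dr\, d\varphi\bigr)$ in polar coordinates centred at $u$, and each radial integral evaluates in closed form in terms of $\Lambda(\cdot, T)$.

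Next, the region $\mathcal{R}$---and hence the angular partition---splits according to how the ball $\B(u, r_l)$ sits relative to the disc $D_c = \B(O, R)$, which is captured by comparing $r_b$ and $r_l$ with $R - r_u$ and $R + r_u$. In Case~1 ($r_b < r_l < R - r_u$), $\B(u, r_l)$ lies entirely inside $D_c$, so $\mathcal{R} = \R^2 \setminus \B(u, r_l)$; the angular integral is trivial ($2\pi$) and the radial integral from $r_l$ to $\infty$ produces the single exponential factor $P_1$. In Case~2 ($r_b < R - r_u$ and $R - r_u < r_l < R + r_u$) the cancellation ball crosses the cluster boundary: for angles $|\varphi| < \Theta$ the cluster boundary is encountered within $r_l$, so the inner radial limit is $R'(\varphi)$, whereas for the remaining angular measure $2(\pi - \Theta)$ the ray exits $\B(u, r_l)$ before reaching the boundary and the inner limit is $r_l$; this partition yields the two exponential factors of $P_2$. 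In Case~3 ($R - r_u < r_b < R + r_u$ with $r_b \le r_l \le R + r_u$) even the nearest BS sits in the boundary zone; the angular circle at $u$ decomposes into three arcs controlled by $\Theta_1$ (comparing $R'$ with $r_b$) and $\Theta_2$ (comparing $R'$ with $r_l$), with inner radial limit $r_b$, $R'(\varphi)$ or $r_l$ respectively, producing the three exponential factors of $P_3$.

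The angles $\Theta,\Theta_1,\Theta_2$ and the function $R'(\varphi)$ come from the law of cosines applied to the triangle formed by $u$, the cluster centre $O$, and the relevant intersection point on $\partial D_c$; this recovers the stated expressions. To conclude, I would combine the three cases via their mutually exclusive indicators and take expectations over the joint law of $(r_u, r_b, r_l)$, where $r_u$ is uniform in area on $D_c$ and the conditional density $\mathbf{f}_{r_l|r_b}$ is the standard $L$-th nearest-neighbour density derived from the void probability of a 2D PPP. The inequality, exactly as in Lemma~\ref{the:cov}, comes from restricting to configurations in which $u$'s $L$ nearest BSs actually lie inside $D_c$; the neglected event has probability vanishing with $\lambda R^2$.

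The main technical obstacle is Case~3: three distinct angular regimes must be identified and patched together, and small errors in the definitions of $\Theta_1,\Theta_2$, or in the orientation of the polar frame at $u$, easily propagate into wrong integration limits and an incorrect final expression. Careful bookkeeping of which circle---$\partial D_c$, $\partial \B(u, r_b)$, or $\partial \B(u, r_l)$---provides the innermost boundary along each direction is the only place where the calculation departs meaningfully from Lemma~\ref{the:cov}.
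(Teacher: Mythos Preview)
Your proposal is correct and follows essentially the same route as the paper's proof: condition on $(r_u,r_b,r_l)$, use the exponential fading to reduce to a Laplace transform of interference, apply the PGFL, and split into the three geometric configurations (the paper's Type~I--III) according to how $\B(u,r_b)$ and $\B(u,r_l)$ sit relative to $D_c$, discarding the remaining low-probability configurations to obtain the inequality. The only minor slip is that your stated region $\mathcal{R}=\R^2\setminus(D_c\cap\B(u,r_l))$ omits the portion of the void $\B(u,r_b)$ lying outside $D_c$ in Case~3, but you implicitly repair this when you take $r_b$ as the inner radial limit on the $\Theta_1$-arc, exactly as the paper does.
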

\begin{proof}
See Appendix \ref{sec:app_clust_lim}.
\end{proof}

\subsection{Results and discussion}
We have used the same parameters as in the  previous section for a fair comparison. The path loss is  chosen as $\alpha =4$, and  the noise variance $\sigma^2$ is chosen so that the average $\snr$ at a receiver at a distance $200$ m is $10$ dB and  the average inter BS distance $D$ and equals  $D=2/\sqrt{\pi\lambda}$, where $\lambda$ is the BS density.  In Fig. \ref{fig:CovPCSI}, we have shown coverage probability of a typical user for various combinations of cluster area, available CSI and BS density. We first observe that   the Monte Carlo simulations are very close to the theoretical results.  From Fig.~\ref{fig:cov_pcsi_fixarea}, we can see that coverage increases with available CSI as expected. We can see the significant performance improvement  even with  the channel knowledge of the two nearest BSs, compared to a  conventional cellular network.

We have seen that for clustering with complete channel knowledge, the performance is improved by increasing the number of cooperating nodes, \ie, the dependence of performance is more on number of cooperating nodes. A similar observation can be seen for clustering with partial channel state information. From Fig. \ref{fig:cov_pcsi_fixL}, we can see that, performance of network with $L=4$ is almost similar for cluster area, $A=0.63$ and $1$ for fixed BS density.  For  a fixed area we can see that the coverage improvement with the density (implies increasing average number of BSs per cluster).  This is because of most of the users will be associated with a closer BS.

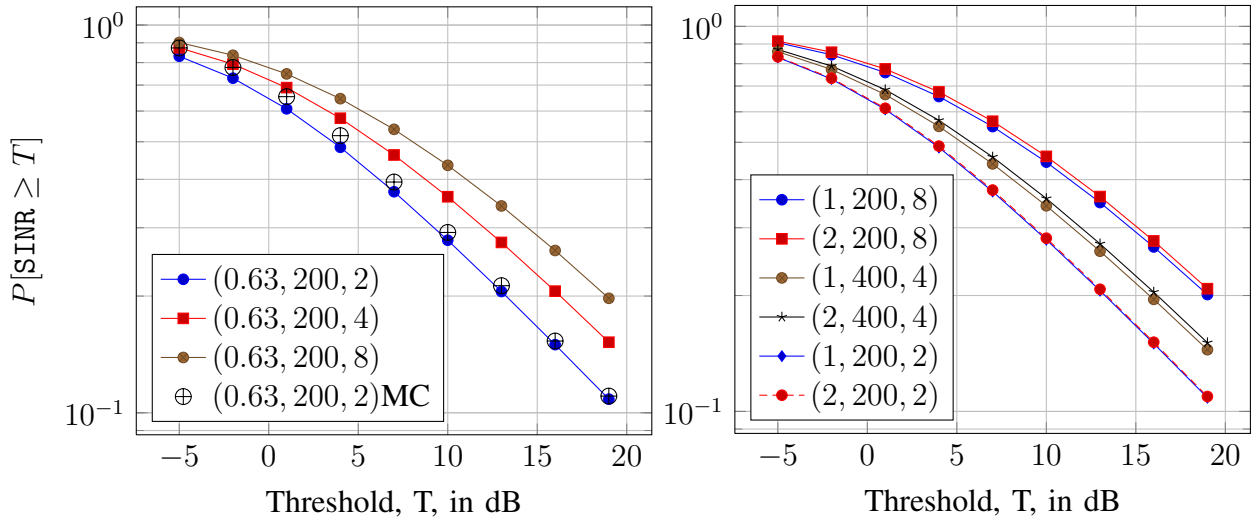
\begin{figure*}%
\centering
\hfill\begin{subfigure}{.5\columnwidth}
\begin{tikzpicture}
	\begin{semilogyaxis}[scale=1,
	grid = both,
legend style={
 cells={anchor=west},
legend pos=south west ,
 },
	xlabel ={Threshold, T, in dB},
	ylabel = {$P[\sinr\geq T]$}
	]
	
 \addplot
coordinates{
	(-5.000000, 0.830730) (-2.000000, 0.729452) (1.000000, 0.608053) (4.000000, 0.483812) (7.000000, 0.371640) (10.000000, 0.278620) (13.000000, 0.205488) (16.000000, 0.149869) (19.000000, 0.108458)
}; \addlegendentry{$(0.63, 200, 2)$}

\addplot
	coordinates{
		(-5.000000, 0.873791) (-2.000000, 0.792939) (1.000000, 0.689578) (4.000000, 0.575006) (7.000000, 0.462259) (10.000000, 0.360746) (13.000000, 0.275020) (16.000000, 0.205909) (19.000000, 0.152051)
};\addlegendentry{$(0.63, 200, 4)$}
\addplot
		coordinates{
			(-5.000000, 0.901547) (-2.000000, 0.835778) (1.000000, 0.748205) (4.000000, 0.645764) (7.000000, 0.538296) (10.000000, 0.434766) (13.000000, 0.341564) (16.000000, 0.262146) (19.000000, 0.197407)
};\addlegendentry{$(0.63, 200, 8)$}

\addplot[
    mark =oplus,  mark size=3, only marks]
coordinates{
	(-5.000000, 0.872893) (-2.000000, 0.777378) (1.000000, 0.653546) (4.000000, 0.518459) (7.000000, 0.393741) (10.000000, 0.291782) (13.000000, 0.212712) (16.000000, 0.153146) (19.000000, 0.110423)
}; \addlegendentry{$(0.63, 200, 2)$MC}
	\end{semilogyaxis}
	\end{tikzpicture}
\caption{Coverage probability for various levels of available CSI with fixed area and density.}%
\label{fig:cov_pcsi_fixarea}%
\end{subfigure}\hfill%
\begin{subfigure}{.45\columnwidth}
\begin{tikzpicture}
	\begin{semilogyaxis}[scale=1,
	grid = both,
legend style={
 cells={anchor=west},
legend pos=south west ,
 },
	xlabel ={Threshold, T, in dB}
	]
	
\addplot
coordinates{
(-5.000000, 0.907090) (-2.000000, 0.844056) (1.000000, 0.758847) (4.000000, 0.657419) (7.000000, 0.549339) (10.000000, 0.444093) (13.000000, 0.348802) (16.000000, 0.267443) (19.000000, 0.201134) 
 };\addlegendentry{$(1, 200, 8)$}

\addplot
coordinates{
(-5.000000, 0.915721) (-2.000000, 0.856893) (1.000000, 0.775394) (4.000000, 0.675793) (7.000000, 0.567211) (10.000000, 0.459748) (13.000000, 0.361500) (16.000000, 0.277200) (19.000000, 0.208360) 
 };\addlegendentry{$(2, 200, 8)$}
\addplot
coordinates{
(-5.000000, 0.859834) (-2.000000, 0.773096) (1.000000, 0.665463) (4.000000, 0.549961) (7.000000, 0.439494) (10.000000, 0.342036) (13.000000, 0.260697) (16.000000, 0.195459) (19.000000, 0.144665)
 };\addlegendentry{$(1, 400, 4)$}

\addplot
coordinates{
(-5.000000, 0.871045) (-2.000000, 0.789055) (1.000000, 0.684900) (4.000000, 0.570217) (7.000000, 0.457996) (10.000000, 0.357337) (13.000000, 0.272497) (16.000000, 0.204140) (19.000000, 0.150856)
 };\addlegendentry{$(2, 400, 4)$}
\addplot
 coordinates{
 (-5.000000, 0.831056) (-2.000000, 0.730083) (1.000000, 0.608971) (4.000000, 0.484820) (7.000000, 0.372514) (10.000000, 0.279247) (13.000000, 0.205870) (16.000000, 0.150063) (19.000000, 0.108528) 
  };\addlegendentry{$(1, 200, 2)$}
\addplot
coordinates{
(-5.000000, 0.833653) (-2.000000, 0.733616) (1.000000, 0.613017) (4.000000, 0.488740) (7.000000, 0.375825) (10.000000, 0.281778) (13.000000, 0.207677) (16.000000, 0.151295) (19.000000, 0.109343) 
 };
\addlegendentry{$(2, 200, 2)$}
\end{semilogyaxis}
	\end{tikzpicture}
\caption{Coverage probability for fixed CSI  and different cluster area and BS density}%
\label{fig:cov_pcsi_fixL}%
\end{subfigure}\hfill
\caption{Coverage probability for various combinations of cluster area, inter base station distance and available CSI  denoted by $(A, D, L)$ where $A=\pi R^2$ is in km${}^2$ and $D$ in m. Cluster area $A=0.63$ ensures average number of BSs per cluster is $20$.}
\label{fig:CovPCSI}
\end{figure*}

The rate profile with limited CSI is provided for different configurations of cluster area, density of BSs and limited channel states, in   Table \ref{tab:rcdflim}. We  observe   that when  CSI limited to two  channels, the cell-edge rate improves from $0.12$ to $0.16$ and with  CSI   of  four and six channels, the cell-edge rate doubles and triples respectively. Similar observation hold true even for the mean rate. Also from the rate profile, we can see that it is the available channel state at the transmitter that matters more than the average number of nodes or area of the cluster. Hence, it is better to build small clusters of about six to eight BSs, so that the receiver complexity can be reduced, and still have performance benefits as compared to that of no clustering.
Hence, ZF-DPC is a good choice for cloud radio network operation in the downlink due to a high gain in capacity even with limited feedback. Furthermore, limiting the CSI to about eight channels captures about $75-81$\% of the rate that could be achieved with complete CSI from the cluster nodes.
\begin{table}
\centering
\begin{subtable}{.33\textwidth}
\centering
\begin{tabular}{l c c c c}
  \hline
$L$&5 \% & 10 \% &  50 \% & Mean\\
\hline
Cell& 0.12 & 0.20 & 1.30 &  2.38 \\
2& 0.16 & 0.31 & 1.91 &  2.83 \\
4& 0.27 & 0.50 & 2.64 &  3.50 \\
6& 0.32 & 0.59 & 3.03 &  3.86 \\
8& 0.35 & 0.65 & 3.27 &  4.09 \\
Full& 0.42 & 0.76 & 3.67 &  4.49 \\
\hline
\end{tabular}
\label{tab:rcdf_lim_200_16}
\caption{$D=200$, $A=0.5$, $\bar{N}=16$}
\end{subtable}%
\begin{subtable}{.33\textwidth}
\centering
\begin{tabular}{l c c c c}
  \hline
$L$&5 \% & 10 \% &  50 \% & Mean\\
\hline
Cell& 0.12 & 0.20 & 1.30 &  2.38 \\
2& 0.16 & 0.31 & 1.91 &  2.84 \\
4& 0.29 & 0.56 & 2.77 &  3.63 \\
6& 0.39 & 0.70 & 3.25 &  4.05 \\
8& 0.42 & 0.76 & 3.67 &  4.49 \\
Full& 0.54 & 1.02 & 4.48 &  5.13 \\
\hline
\end{tabular}
\label{tab:rcdf_lim_200_32}
\caption{$D=200$, $A=1$, $\bar{N}=32$}
\end{subtable}
\begin{subtable}{.33\textwidth}
\centering
\begin{tabular}{l c c c c}
  \hline
$L$&5 \% & 10 \% &  50 \% & Mean\\
\hline
Cell& 0.12 & 0.20 & 1.30 &  2.38 \\
2& 0.17 & 0.32 & 1.96 &  2.86 \\
4& 0.32 & 0.60 & 2.85 &  3.70 \\
6& 0.43 & 0.79 & 3.38 &  4.17 \\
8& 0.52 & 0.93 & 3.72 &  4.49 \\
Full& 0.71 & 1.30 & 5.28 &  5.80 \\
\hline
\end{tabular}
\label{tab:rcdf_lim_200_64}
\caption{$D=200$, $A=2$, $\bar{N}=64$}
\end{subtable}

\hfill\begin{subtable}{.33\textwidth}
\centering
\begin{tabular}{l c c c c}
  \hline
$L$&5 \% & 10 \% &  50 \% & Mean\\
\hline
Cell& 0.12 & 0.20 & 1.30 &  2.38 \\
2& 0.15 & 0.29 & 1.94 &  2.87 \\
4& 0.29 & 0.52 & 2.65 &  3.51 \\
6& 0.32 & 0.60 & 3.00 &  3.79 \\
8& 0.38 & 0.71 & 3.40 &  4.18 \\
Full& 0.39 & 0.75 & 3.67 &  4.48 \\
\hline
\end{tabular}
\label{tab:rcdf_lim_400_16}
\caption{$D=400$, $A=2$, $\bar{N}=16$}
\end{subtable}\hfill
\begin{subtable}{.33\textwidth}
\centering
\begin{tabular}{l c c c c}
  \hline
$L$&5 \% & 10 \% &  50 \% & Mean\\
\hline
Cell& 0.12 & 0.20 & 1.30 &  2.38 \\
2& 0.16 & 0.31 & 1.92 &  2.85 \\
4& 0.30 & 0.55 & 2.76 &  3.62 \\
6& 0.40 & 0.72 & 3.26 &  4.07 \\
8& 0.43 & 0.79 & 3.52 &  4.30 \\
Full& 0.58 & 1.04 & 4.54 &  5.21 \\
\hline
\end{tabular}
\label{tab:rcdf_lim_400_32}
\caption{$D=400$, $A=4$, $\bar{N}=32$}
\end{subtable}\hfill
\begin{subtable}{.33\textwidth}
\centering
\begin{tabular}{l c c c c}
  \hline
$L$&5 \% & 10 \% &  50 \% & Mean\\
\hline
Cell& 0.12 & 0.20 & 1.30 &  2.38 \\
2& 0.16 & 0.32 & 1.97 &  2.88 \\
4& 0.33 & 0.61 & 2.85 &  3.69 \\
6& 0.44 & 0.79 & 3.37 &  4.16 \\
8& 0.51 & 0.92 & 3.73 &  4.45 \\
Full& 0.72 & 1.35 & 5.45 &  5.95 \\
\hline
\end{tabular}
\label{tab:rcdf_lim_400_64}
\caption{$D=400$, $A=8$, $\bar{N}=64$}
\end{subtable}\hfill
\caption{Rate profile of a typical user for various values of PCSI,  cluster area and BS density. Rates are in bits/sec/Hz. \textit{Cell} and \textit{Full} represent conventional cellular networks without cooperation and clustering with full cooperation respectively.}
\label{tab:rcdflim}
\end{table}

\section{Conclusion}\label{sec:conc}
In this paper, the impact of finite clustering in a cloud radio on the coverage and rate is analyzed with complete and with partial channel feedback. In particular, the coverage probability of a typical user served by group of geographically clustered BSs is provided.    We compared the cloud network performance with a conventional cellular network. From this analysis and simulations, we observe that the average number of nodes and available channel state feed back in a cluster play a critical role, rather than the geographical area of a cooperating cluster. Even finite clustering provides substantial rate gain to both the cluster edge and interior users.  It is found that even with practically feasible feedback of  the nearest six channel states from each user, the edge rate can be quadrupled. 
 \section*{Acknowledgment}
 We would like to acknowledge the IU-ATC project for its support. This project is funded by the Department of Science and Technology (DST), India and Engineering and Physical
Sciences Research Council (EPSRC). We would also like to thank the CPS project in IIT Hyderabad and the Samsung GRO project for supporting this work.

\bibliographystyle{IEEEtran}
\bibliography{IEEEabrv,report2xN}

\begin{thebibliography}{10}
\providecommand{\url}[1]{#1}
\csname url@samestyle\endcsname
\providecommand{\newblock}{\relax}
\providecommand{\bibinfo}[2]{#2}
\providecommand{\BIBentrySTDinterwordspacing}{\spaceskip=0pt\relax}
\providecommand{\BIBentryALTinterwordstretchfactor}{4}
\providecommand{\BIBentryALTinterwordspacing}{\spaceskip=\fontdimen2\font plus
\BIBentryALTinterwordstretchfactor\fontdimen3\font minus
  \fontdimen4\font\relax}
\providecommand{\BIBforeignlanguage}[2]{{%
\expandafter\ifx\csname l@#1\endcsname\relax
\typeout{** WARNING: IEEEtran.bst: No hyphenation pattern has been}%
\typeout{** loaded for the language `#1'. Using the pattern for}%
\typeout{** the default language instead.}%
\else
\language=\csname l@#1\endcsname
\fi
#2}}
\providecommand{\BIBdecl}{\relax}
\BIBdecl

\bibitem{3p9G:standard}
\emph{Physical Layer Aspects for Evolved {UTRA} (Release 7)}, 3GPP Std. TR
  25.814, 2005.

\bibitem{Telatar99capacityof}
E.~Telatar, ``Capacity of multi-antenna {Gaussian} channels,'' \emph{European
  transactions on telecommunications}, vol.~10, no.~6, pp. 585--595, 1999.

\bibitem{Caire:2009A}
S.~A. Ramprashad, G.~Caire, and H.~C. Papadopoulos, ``Cellular and network
  {MIMO} architectures: {MU-MIMO} spectral efficiency and costs of channel
  state information,'' in \emph{Signals, Systems and Computers, 2009 Conference
  Record of the Forty-Third Asilomar Conference on}.\hskip 1em plus 0.5em minus
  0.4em\relax IEEE, 2009, pp. 1811--1818.

\bibitem{LTEcomp}
\BIBentryALTinterwordspacing
\emph{Coordinated multi-point operation for {LTE} physical layer aspects
  ({R}elease 11), {3GPP} {TR} 36.819, 2011}. [Online]. Available:
  \url{http://www.3gpp.org}
\BIBentrySTDinterwordspacing

\bibitem{Gorokhov}
S.~Annapureddy, A.~Barbieri, S.~Geirhofer, S.~Mallik, and A.~Gorokhov,
  ``Coordinated joint transmission in {WWAN},'' \emph{IEEE Commun. Theory
  Wksp}, 2010.

\bibitem{Fettweis}
R.~Irmer, H.~Droste, P.~Marsch, M.~Grieger, G.~Fettweis, S.~Brueck, H.-P.
  Mayer, L.~Thiele, and V.~Jungnickel, ``Coordinated multipoint: Concepts,
  performance, and field trial results,'' \emph{Communications Magazine, IEEE},
  vol.~49, no.~2, pp. 102--111, 2011.

\bibitem{Gaal:2012}
A.~Barbieri, P.~Gaal, S.~Geirhofer, T.~Ji, D.~Malladi, Y.~Wei, and F.~Xue,
  ``Coordinated downlink multi-point communications in heterogeneous cellular
  networks,'' in \emph{Information Theory and Applications Workshop (ITA),
  2012}, Feb 2012, pp. 7--16.

\bibitem{Foschini:2006a}
M.~K. Karakayali, G.~J. Foschini, and R.~A. Valenzuela, ``Network coordination
  for spectrally efficient communications in cellular systems,'' \emph{Wireless
  Communications, IEEE}, vol.~13, no.~4, pp. 56--61, 2006.

\bibitem{Gersbert:Dec2010}
\BIBentryALTinterwordspacing
D.~Gesbert, S.~Hanly, H.~Huang, S.~S. Shitz, O.~Simeone, and W.~Yu,
  ``Multi-cell {MIMO} cooperative networks: A new look at interference,''
  \emph{IEEE J.Sel. A. Commun.}, vol.~28, no.~9, pp. 1380--1408, Dec. 2010.
  [Online]. Available: \url{http://dx.doi.org/10.1109/JSAC.2010.101202}
\BIBentrySTDinterwordspacing

\bibitem{Wang:Nov2011}
P.~Wang, H.~Wang, L.~Ping, and X.~Lin, ``On the capacity of {MIMO} cellular
  systems with base station cooperation,'' \emph{Wireless Communications, IEEE
  Transactions on}, vol.~10, no.~11, pp. 3720--3731, 2011.

\bibitem{Poor:2012a}
O.~Simeone, N.~Levy, A.~Sanderovich, O.~Somekh, B.~Zaidel, H.~Poor, and
  S.~Shamai, ``Information theoretic considerations for wireless cellular
  systems: {T}he impact of cooperation,'' \emph{Foundations and Trends in
  Communications and Information Theory}, vol.~7, 2012.

\bibitem{Lozano:2009e}
\BIBentryALTinterwordspacing
S.~Venkatesan, H.~Huang, A.~Lozano, and R.~Valenzuela, ``A {WiMAX}-based
  implementation of network {MIMO} for indoor wireless systems,'' \emph{EURASIP
  J. Adv. Signal Process}, vol. 2009, pp. 9:3--9:3, Feb. 2009. [Online].
  Available: \url{http://dx.doi.org/10.1155/2009/963547}
\BIBentrySTDinterwordspacing

\bibitem{FoschiniKarakayaliValenzuela}
G.~Foschini, K.~Karakayali, and R.~Valenzuela, ``Coordinating multiple antenna
  cellular networks to achieve enormous spectral efficiency,'' \emph{IEE
  Proceedings-Communications}, vol. 153, no.~4, pp. 548--555, 2006.

\bibitem{GesbertHanlyShamaiYu}
D.~Gesbert, S.~Hanly, H.~Huang, S.~Shamai~Shitz, O.~Simeone, and W.~Yu,
  ``Multi-cell {MIMO} cooperative networks: A new look at interference,''
  \emph{Selected Areas in Communications, IEEE Journal on}, vol.~28, no.~9, pp.
  1380--1408, December 2010.

\bibitem{DingVincent}
Z.~Ding and H.~Poor, ``The use of spatially random base stations in cloud radio
  access networks,'' \emph{Signal Processing Letters, IEEE}, vol.~20, no.~11,
  pp. 1138--1141, Nov 2013.

\bibitem{Gesbert:2008icc}
A.~Papadogiannis, D.~Gesbert, and E.~Hardouin, ``A dynamic clustering approach
  in wireless networks with multi-cell cooperative processing,'' in
  \emph{Communications, 2008. ICC'08. IEEE International Conference on}.\hskip
  1em plus 0.5em minus 0.4em\relax IEEE, 2008, pp. 4033--4037.

\bibitem{heath}
A.~Lozano, R.~Heath, and J.~Andrews, ``Fundamental limits of cooperation,''
  \emph{Information Theory, IEEE Transactions on}, vol.~59, no.~9, pp.
  5213--5226, Sept 2013.

\bibitem{spcomKuchi14}
K.~Kuchi, M.~M. Mohiuddin, T.~V. Sreejith, R.~Bansal, G.~V.~V. Sharma, and
  S.~Emami, ``Cloud radios with limited feedback,'' \emph{International
  Conference on Signal Processing and Communications (SPCOM)}, July 2014.

\bibitem{STVKKRK_ICC15}
T.~V. Sreejith, K.~Kuchi, and R.~K. Ganti, ``Performance of cloud radio
  networks with clustering,'' in \emph{Proc., IEEE ICC}, London, June 2015.

\bibitem{Mennerich:2011}
W.~Mennerich and W.~Zirwas, ``Reporting effort for cooperative systems applying
  interference floor shaping,'' in \emph{Personal Indoor and Mobile Radio
  Communications (PIMRC), 2011 IEEE 22nd International Symposium on}.\hskip 1em
  plus 0.5em minus 0.4em\relax IEEE, 2011, pp. 541--545.

\bibitem{Garcia:2010}
I.~D. Garcia, N.~Kusashima, K.~Sakaguchi, and K.~Araki, ``Dynamic cooperation
  set clustering on base station cooperation cellular networks,'' in
  \emph{Personal Indoor and Mobile Radio Communications (PIMRC), 2010 IEEE 21st
  International Symposium on}.\hskip 1em plus 0.5em minus 0.4em\relax IEEE,
  2010, pp. 2127--2132.

\bibitem{Tanbourgi}
R.~Tanbourgi, S.~Singh, J.~Andrews, and F.~Jondral, ``Analysis of non-coherent
  joint-transmission cooperation in heterogeneous cellular networks,'' in
  \emph{Communications (ICC), 2014 IEEE International Conference on}, June
  2014, pp. 5160--5165.

\bibitem{NigamHaenggi}
G.~Nigam, P.~Minero, and M.~Haenggi, ``Coordinated multipoint joint
  transmission in heterogeneous networks,'' \emph{Communications, IEEE
  Transactions on}, vol.~62, no.~11, pp. 4134--4146, Nov 2014.

\bibitem{stoyan}
D.~Stoyan, W.~S. Kendall, and J.~Mecke, \emph{Stochastic Geometry and its
  Applications}, 2nd~ed., ser. Wiley series in probability and mathematical
  statistics.\hskip 1em plus 0.5em minus 0.4em\relax New York: Wiley, 1995.

\bibitem{dpc}
M.~H.~M. Costa, ``Writing on dirty paper (corresp.),'' \emph{Information
  Theory, IEEE Transactions on}, vol.~29, no.~3, pp. 439--441, May 1983.

\bibitem{CaireShamai}
G.~Caire and S.~Shamai, ``On the achievable throughput of a multiantenna
  gaussian broadcast channel,'' \emph{Information Theory, IEEE Transactions
  on}, vol.~49, no.~7, pp. 1691--1706, July 2003.

\bibitem{Windpassinger_THP}
C.~Windpassinger, R.~Fischer, T.~Vencel, and J.~Huber, ``Precoding in
  multiantenna and multiuser communications,'' \emph{Wireless Communications,
  IEEE Transactions on}, vol.~3, no.~4, pp. 1305--1316, July 2004.

\bibitem{LCWangCovPerfAnalMUMIMOBCS}
L.-C. Wang, C.-J. Yeh, and C.-F. Li, ``Coverage performance analysis of
  multiuser {MIMO} broadcast systems,'' in \emph{Vehicular Technology
  Conference, 2008. VTC Spring 2008. IEEE}, May 2008, pp. 832--836.

\bibitem{CaireOnAchvblthrptMulantGauBC}
G.~Caire and S.~Shamai, ``On the achievable throughput of a multiantenna
  gaussian broadcast channel,'' \emph{Information Theory, IEEE Transactions
  on}, vol.~49, no.~7, pp. 1691--1706, July 2003.

\bibitem{ganti_coverage}
J.~G. Andrews, F.~Baccelli, and R.~K. Ganti, ``A tractable approach to coverage
  and rate in cellular networks,'' \emph{Communications, IEEE Transactions on},
  vol.~59, no.~11, pp. 3122--3134, November 2011.

\bibitem{garg199995}
\BIBentryALTinterwordspacing
V.~Garg, \emph{IS-95 CDMA and cdma2000: Cellular/PCS Systems
  Implementation}.\hskip 1em plus 0.5em minus 0.4em\relax Pearson Education,
  1999. [Online]. Available:
  \url{https://books.google.co.in/books?id=VMe87kOKW3wC}
\BIBentrySTDinterwordspacing

\end{thebibliography}

\appendices
\section{Proof of Lemma \ref{the:cov}}\label{sec:app_clust_full}
Here, for notational simplicity, $\lambda$, $r_b$, $h_b$ and $h_j$ will be used  instead of $\lambda_b$, $r_{ii}$, $h_{ii}$ and $h_{i,j}$ respectively. The users are classified into three types according to $r_b$, $r_u$ and $R$. Let $D_u=\mathcal{B}(u,r_b)$.

\begin{tabbing}
\hspace*{0.2cm}\=Type-I\hspace*{0.4cm}\= :\hspace*{0.2cm}\=\begin{minipage}[t]{0.85\linewidth}
A user is classified as type-I  when $D_c\cap D_u=D_u$, \ie,   $r_u+r_b\leq R$.  In this case, the interfering   BSs will belong to $\Phi_b\setminus D_c\cap \Phi_b$.  \end{minipage}\\[5pt]
\>Type-II\> : \>\begin{minipage}[t]{0.9\linewidth}
A user is classified as type-II when $R-r_u<r_b<R+r_u$.  In this case, the void region, $D_u$, around the user $u$ is not entirely in $D_c$. The interfering BSs are $\Phi_b\setminus (D_c\cup D_u \cap \Phi_b)$.
\end{minipage}\\[5pt]
\>Type-III\> : \>\begin{minipage}[t]{0.9\linewidth}
 A user is classified as type-III when $D_c\subset D_u$. However this is a very low probability event and happens only when the density of the BSs is very low and these users are not belonged to $C_0$ as these are tagged to BSs out side $C_0$.
\end{minipage}\\[5pt]
\end{tabbing}

The probabilities of type-I and type-II users are given by,
\begin{align*}
p_1&=\P[0<r_u<R,0<r_b<R-r_u],\\
&=1-\frac{ \text{erf}\left(\sqrt{\pi } \sqrt{\lambda } R\right)}{{\sqrt{\lambda }  R}}-\frac{e^{-\pi  \lambda
   R^2}-1}{\pi \lambda  R^2},
\end{align*} and similarly
\begin{align*}
p_2&=\frac{\pi R\sqrt{\lambda }\, \text{erf}\left(2 \sqrt{\pi } \sqrt{\lambda }
   R\right)+e^{-4 \pi  \lambda  R^2}-1}{\lambda\pi  R^2}.
\end{align*}

\begin{figure*}%
\centering
\hfill
\begin{subfigure}{.45\columnwidth}
\begin{tikzpicture}[scale=0.7]
\draw[fill=gray!20,draw=black!0] (-5,-5)--(5,-5) -- (5,5) -- (-5,5)-- cycle;
    \def\firstellipse{(0,0) circle (4)}
    \def\secondellipse{(2,1.5) circle (1.2)}
    \fill[gray!20] \firstellipse \secondellipse;

    \begin{scope}
        \fill[white] \firstellipse;
    \end{scope}

    \draw \firstellipse \secondellipse;
	\draw[] (-5,0) -- (5,0)  coordinate (x axis);
	\draw[] (0,-5) -- (0,5) coordinate (y axis);
	\node [fill,draw,scale=0.3,circle]at (2,1.5) {};
	\draw (2.3,1.8) node  {$u$};
	\draw[] (0,0) -- (2,1.5); 
	\draw (1.1,1.1) node  {$r_u$};
	\node [draw,scale=0.2,regular polygon,regular polygon sides=3,fill=black!100]at (3.1,1){};
	\draw[] (2,1.5) -- (3.1,1); 
	\draw (2.7,1.4) node  {$r_b$};
	\node [draw,scale=0.2,regular polygon,regular polygon sides=3,fill=black!100]at (0.4,3){};
	\node [draw,scale=0.2,regular polygon,regular polygon sides=3,fill=black!100]at (-1,-2){};
	\node [draw,scale=0.2,regular polygon,regular polygon sides=3,fill=black!100]at (1,0.4){};
	\node [draw,scale=0.2,regular polygon,regular polygon sides=3,fill=black!100]at (-3,-1.6){};
	\node [draw,scale=0.2,regular polygon,regular polygon sides=3,fill=black!100]at (3,-1.6){};
	\node [draw,scale=0.2,regular polygon,regular polygon sides=3,fill=black!100]at (4,1.8){};
	\node [draw,scale=0.2,regular polygon,regular polygon sides=3,fill=black!100]at (2,4.6){};
	\node [draw,scale=0.2,regular polygon,regular polygon sides=3,fill=black!100]at (-3,4.2){};
	\node [draw,scale=0.2,regular polygon,regular polygon sides=3,fill=black!100]at (-2,2.2){};
	\node [draw,scale=0.2,regular polygon,regular polygon sides=3,fill=black!100]at (-2,-4.2){};
\end{tikzpicture}
\caption[]{Type-I user}
\label{fig:TypeI}%
\end{subfigure}\hfill%
\begin{subfigure}{.45\columnwidth}
\centering
 \begin{tikzpicture}[scale=0.7]
 \draw[fill=gray!20,draw=black!0] (-5,-5)--(5,-5) -- (5,5) -- (-5,5)-- cycle;
 	\draw[fill=blue!20,draw=black!0] (1.3,-5)--(5,-5) -- (5,-1.8) -- (2,-2.2)-- cycle;
     \def\firstellipse{(0,0) circle (4)}
     \def\secondellipse{(2,-2.2) circle (1.5)}
     \fill[gray!20] \firstellipse \secondellipse;
     \begin{scope}
         \fill[white] \secondellipse;
         \fill[white] \firstellipse;
     \end{scope}
     \draw \firstellipse \secondellipse;
 	\draw[] (-5,0) -- (5,0)  coordinate (x axis);
 	\draw[] (0,-5) -- (0,5) coordinate (y axis);
 	\node [fill,draw,scale=0.3,circle]at (2,-2.2) {};
 	\draw (1.8,-2.4) node  {$u$};
 	\draw[] (0,0) -- (2,-2.2); 
 	\draw (1.5,-1.3) node  {$r_u$};
 	\draw[] (2,-2.2) -- (1.3,-5);
 	\draw[] (2,-2.2) -- (5,-1.8);
 	\draw[] (2,-2.2) -- (4.5,-5);
 	\draw[dashed] (2,-2.2) -- (1.2,-3.8);
 	\draw (1,-3.6) node  {$R^\prime$};
 	\node [draw,scale=0.2,regular polygon,regular polygon sides=4,fill=black!100]at (1.2,-3.8){};
 	\node [draw,scale=0.2,regular polygon,regular polygon sides=4,fill=black!100]at (3.6,-4){};
 	\draw (3.7,-3.8) node  {$A$};
 	\node [draw,scale=0.2,regular polygon,regular polygon sides=4,fill=black!100]at (3.5,-2){};	
 	\draw (3.7,-2.3) node  {$B$};
 	\draw (4.5,-3) node  {$\mathcal{R}_1$};
 	\draw (4.5,3) node  {$\mathcal{R}_2$};
 	\node [draw,scale=0.2,regular polygon,regular polygon sides=3,fill=black!100]at (3,-1.1){};
 	\draw[] (2,-2.2) -- (3,-1.1); 
 	\draw (2.5,-1.3) node  {$r_b$};
 	\node [draw,scale=0.2,regular polygon,regular polygon sides=3,fill=black!100]at (0.4,3){};
 	\node [draw,scale=0.2,regular polygon,regular polygon sides=3,fill=black!100]at (-1,-2){};
 	\node [draw,scale=0.2,regular polygon,regular polygon sides=3,fill=black!100]at (1,0.4){};
 	\node [draw,scale=0.2,regular polygon,regular polygon sides=3,fill=black!100]at (-3,-1.6){};
 	\node [draw,scale=0.2,regular polygon,regular polygon sides=3,fill=black!100]at (4,1.8){};
 	\node [draw,scale=0.2,regular polygon,regular polygon sides=3,fill=black!100]at (2,4.6){};
 	\node [draw,scale=0.2,regular polygon,regular polygon sides=3,fill=black!100]at (-3,4.2){};
 	\node [draw,scale=0.2,regular polygon,regular polygon sides=3,fill=black!100]at (-2,2.2){};
 	\node [draw,scale=0.2,regular polygon,regular polygon sides=3,fill=black!100]at (-2,-4.2){};
  \end{tikzpicture}
 \caption{Type-II user}%
 \label{subfigb}%
 \end{subfigure}\hfill
 \caption{Illustration of type-I and II users and their interference region. For type-I user interferer region is outside $D_C$ and for type-II it is outside $D_C\cup D_u$ also for type-II the region is divided into two, $\mathcal{R}_1$ and $\mathcal{R}_2$. In $\mathcal{R}_2$, interferer distance, $r_y\geq r_b$ and in $\mathcal{R}_1$ $r_y\geq R^\prime$.}
 \label{fig:TypeII}
 \end{figure*}
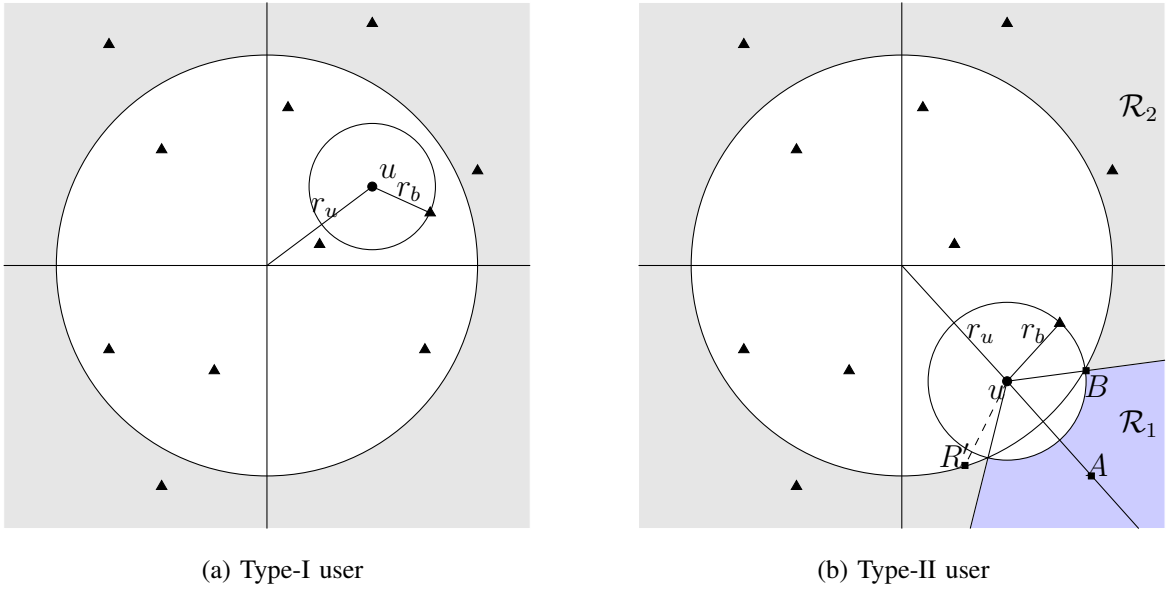

\subsection{User Type-I}
 By shifting origin to $u$, the $\sinr$ of the user at the origin  can be written as,
    \begin{align}
   \sinr&=\dfrac{|h|^2r_{b}^{-\alpha}}{\sigma^2+\sum_{y\in\Phi\setminus D_R}|h_y|^2 \|y \|^{-\alpha}},
    \end{align} where $D_R=\Phi\cap\mathcal{B}(-u,R)$.
The conditional distribution of $\sir$ is $\P[\sir>T|r_u,r_b]$
\begin{align}
&=\P\left[\dfrac{|h_{b}|^2r_{b}^{-\alpha}}{\sigma^2+\sum_{y\in\Phi\setminus D_R}|h_y|^2 \|y \|^{-\alpha}}>T|r_u,r_b\right],\nonumber\\
&=e^{-T r_b^\alpha \sigma^2}\L_I(T r_b^\alpha).\label{eq:cov_condit}
\end{align}
Laplace transform of interference,
\begin{align*}
\L_I(s)&= \E\left[ \exp\left(- s\sum_{y\in\Phi\setminus D_R}|h_y|^2 \|y\|^{-\alpha}\right)\right],\\
&=\E\left[ \prod_{y\in\Phi\setminus D_R}\dfrac{1}{1+s\|y \|^{-\alpha}}\right],\\
&=\exp\left(-\lambda \int_{\theta=0}^{2\pi}\int_{R^\prime}^\infty \left(1-\dfrac{1}{1+s r_y^{-\alpha}}\right)r_y\d r_y \d \theta \right),\\
&=\exp\left(-\lambda \int_{\theta=0}^{2\pi}\frac{s (R^\prime)^{2-\alpha } \, _2F_1\left(1,\frac{\alpha -2}{\alpha
   };2-\frac{2}{\alpha };-{(R^\prime)}^{-\alpha } s\right)}{\alpha -2} \d \theta \right)\label{eq:laplace_exact1}
\end{align*} where $R^\prime=\sqrt{\cos ^2(\theta ) r_u^2+R^2}-\cos (\theta ) r_u$ is the distance of $u$ to the point on the disc $D_c$ at an angle $\theta$.
\subsection{User Type-II}
For users with $0<r_u<R$, $R-r_u<r_b<R+r_u$,
\begin{align}
   \sinr&=\dfrac{|h_{b}|^2r_{b}^{-\alpha}}{\sigma^2+\sum_{y\in\Phi\setminus D_E}|h_y|^2 \|y-u \|^{-\alpha}},
    \end{align} where
    $D_E=\Phi\cap(\mathcal{B}(0,R)\cup\mathcal{B}(u,r_b))$.

 The interfering regions are shaded in Fig.~\ref{fig:TypeII}. In the region, $\mathcal{R}_1$, the interferer distance, $r_y$, from user is such that $r_b<r_y$ and in region, $\mathcal{R}_2$, interferes are beyond the disc perimeter.
 The distance from any $u\in D_c$ to a point on the perimeter of the disc at an angle $\varphi$ from $UA$ is given by $y=R^\prime=\sqrt{\cos^2(\varphi) r_u^2+R^2}-\cos^2(\varphi) r_u$.
 Laplace transform of interference,
 \begin{align*}
 \L_I(s)&= \E\left[ \exp\left(- s\sum_{y\in\Phi\setminus D_E}|h_y|^2 \|y\|^{-\alpha}\right)\right]\\
 &= \exp\left(-\lambda\int_{\varphi=-\Theta}^{\Theta}\int_{r_y=r_b}^{\infty}\left(1-\frac{1}{1+s r_y^{-\alpha}} \right)r_y\d r_y\d \varphi\right.\nonumber\\
 &\left. -\lambda\int_{\varphi=\Theta}^{2\pi-\Theta}\int_{r_y=R^\prime}^{\infty}\left(1-\frac{1}{1+s r_y^{-\alpha}} \right)r_y\d r_y\d \varphi\right),\\
 &=e^{-\lambda 2 \Theta  \frac{ s r_b^{2-\alpha} \, _2F_1\left(1,\frac{\alpha -2}{\alpha
     };2-\frac{2}{\alpha };-T \right)}{\alpha -2}}\nonumber\\
     &\times e^{-\lambda\int_{-\Theta}^{\Theta}\frac{s \left({R^\prime}\right){}^{2-\alpha } \,
        _2F_1\left(1,\frac{\alpha -2}{\alpha };2-\frac{2}{\alpha };-T r_b^\alpha \left({R^\prime}\right){}^{-\alpha }\right)}{\alpha -2}\d \varphi},\label{eq:laplace_exact2}
 \end{align*}
 where $\Theta$ is the angle between $u$ and the intersection of the circles, $\angle AUB$, \ie,  $\Theta=\pi-\arccos\left[ \frac{r_b^2+r_u^2-R^2}{2 r_b r_u}\right]$ and $R^\prime=\sqrt{b_n^2 r_u^2-r_u^2+R^2}-b_n r_u$ is the distance from origin to the arc of cluster disc from $\Theta$ to $2\pi-\Theta$.
\section{Proof of lemma \ref{lem:scaling}}\label{sec:app_scaling}
\begin{proof}
	The probability that the BS that the user at $y=(R(1-\delta),0)$ associates with a BS in the cluster is lower bounded by
	\[\P(\Phi(B(y,\delta))>0)=1-\exp(-\lambda R^2\delta^2),\]
	and tends to one as $R$ becomes larger.  If $\epsilon \approx 0$, this implies that the cluster radius $R$ should be very large. Hence the interference (with very high probability) is from all the BSs outside the cluster. We now compute   the probabilities conditioned on the distance $r$ of the user to the closest BS. Since $|h|^2$ is exponentially distributed, the conditional probability is 
	\[{\tilde{P}}_{c,R}(T)=\exp(-T\sigma^2 r^\alpha)\underbrace{\E[\exp(-T r^\alpha \I_i(\Phi))]}_{T_1}.\]
	We will now look at term containing the interference.  First averaging over the fading terms in the interference which are exponentially distributed,
	\begin{align*}
	T_1=& \E\prod_{x\in \Phi\cap B(o,R)^c}e^{-T r^\alpha h_{xy}\|x-y\|^{-\alpha}},\\
	=&\E\prod_{x\in \Phi\cap B(o,R)^c}\frac{1}{1+T r^\alpha\|x-y\|^{-\alpha}}.
	\end{align*}
	Using the probability generating functional of the Poisson point process and using the substitution $x\to z'R$ we have
	\begin{align}
	T_1 = e^{-\lambda R^2\int_{B(o,1)^c} \frac{1}{1+T^{-1} r^{-\alpha}R^\alpha\|z-\tilde{y}\|^{\alpha}}\d z},
	\label{eq:act}
	\end{align}
	where  $\tilde{y} =(1-\delta,0)$.
	We want $T_1$ to be approximately $1-\epsilon$. When $\epsilon$ is small it is easy to see that the argument of the exponent should be close to zero. This is possible only when $R$ is very large, and for large $R$ we have
	\begin{align}
	T_1 \sim e^{-\lambda  R^{2-\alpha}T r^\alpha\int_{B(o,1)^c}^\infty \frac{1}{\|z-\tilde{y}\|^{\alpha}}\d z}.
	\end{align}
	Let  $\Delta(\delta) =\int_{B(o,1)^c}^\infty \frac{1}{\|z-\tilde{y}\|^{\alpha}}\d z$. Observe that $\Delta(\delta)<\infty$ for $\delta<1$. Hence averaging over the radius $r$, 
	%
	%
	\begin{align}
	P_{c,R}(T) \sim&  \E_r\left[\exp(-T\sigma^2 r^\alpha)  e^{-\lambda  R^{2-\alpha}T r^\alpha\Delta(\delta)}\right]\\
	\stackrel{(a)}{\sim}& \E_r\left[\exp(-T\sigma^2 r^\alpha) (1- \lambda  R^{2-\alpha}T r^\alpha\Delta(\delta))\right],
	\label{eq:asym}
	\end{align}
	where $(a)$ follows by the approximation $\exp(-x) \sim 1-x, x \to 0$. Setting 
	\begin{align*}
	R= \left(\frac{\lambda  T \Delta(\delta)\eta(T)}{\ln(1/(1-\epsilon))}\right)^{1/(\alpha-2)} \sim\left( \lambda  T \Delta(\delta)\eta(T) \epsilon^{-1}\right)^{1/(\alpha-2)},
	\end{align*}
	where $\eta(T) =\E[\exp(-T\sigma^2 r^\alpha)r^\alpha]/\E[\exp(-T\sigma^2 r^\alpha) ]$ would lead to 
	\[P_{c,R}(T) \sim (1-\epsilon)\E_r \left[\exp(-T\sigma^2 r^\alpha) \right].\]
\end{proof}
 \section{Proof of lemma \ref{the:cov_lim}}\label{sec:app_clust_lim}
 Here, the users are classified into five types according to $r_b$, $r_u$, $R$ and $r_l$, where $r_l$ is the distance to $L$th nearest interferer from the typical user. Let $D_u=\mathcal{B}(u,r_b)$ and $D_L=\mathcal{B}(u,r_l)$. The possible five cases of $r_u$, $r_b$ and $r_l$ are shown in Fig.~\ref{fig:CloudClustPCSI}.

\begin{tabbing}
\hspace*{0.2cm}\=Type-I\hspace*{0.4cm}\= :\hspace*{0.2cm}\=\begin{minipage}[t]{0.85\linewidth}
A user is classified as type-I  when $D_c\cap D_u=D_u$, \ie,   $r_u+r_b\leq R$ and $D_c\cap D_L=D_L$, \ie,    $r_b\leq r_l \leq R-r_u$.  In this case, the interfering   BSs will belong to $\Phi_b\setminus D_L\cap \Phi_b$, see Fig.~\ref{fig:PCSITypeI}.  Probability of this users will be high for dense networks. \end{minipage}\\[5pt]
\>Type-II\> : \>\begin{minipage}[t]{0.9\linewidth}
Here, $0\leq r_b<R-r_u$ and $R-r_u\leq r_l \leq R+r_u$. Here the void region $D_u$ is inside $D_c$ and the interfering BSs belongs to $\Phi_b\setminus (D_c\cap D_L \cap \Phi_b)$, see Fig.~\ref{fig:PCSITypeII}.
\end{minipage}\\[5pt]
\>Type-III\> : \>\begin{minipage}[t]{0.9\linewidth}
Here, $R-r_u\leq r_b<R+r_u$ and $r_b\leq r_l \leq R+r_u$. Here the void region $D_u$ is not perfectly inside $D_c$ and the interfering BSs belongs to $\Phi_b\setminus (D_c\cap D_L\cup D_u \cap \Phi_b)$, see Fig.~\ref{fig:PCSITypeIII}.
\end{minipage}\\[5pt]
\>Type-IV\> : \>\begin{minipage}[t]{0.9\linewidth}
When $D_L\cap D_c=D_L$, means the number of interfering BSs inside the cluster of interest is less than $L$, we have not considered this case.
\end{minipage}\\[5pt]
\>Type-V\> : \>\begin{minipage}[t]{0.9\linewidth}
Physically inside the cluster but not tagged to a BS of the cluster area it belongs to.
\end{minipage}\\[5pt]
\end{tabbing}

\begin{figure}%
\hfill\begin{subfigure}{.33\columnwidth}
\centering
\begin{tikzpicture}[scale=0.5]
	\draw[fill=gray!20,draw=black!0] (-5,-5)--(5,-5) -- (5,5) -- (-5,5)-- cycle;
    \def\firstellipse{(0,0) circle (4)}
    \def\secondellipse{(-0.5,0.4) circle (1.5)}
    \def\thirdellipse{(-0.5,0.4) circle (3.2)}
    \fill[gray!20] \firstellipse \secondellipse;

    \begin{scope}
        \clip \firstellipse;
        \fill[white] \secondellipse;
        \fill[white] \thirdellipse;
    \end{scope}

    \draw \firstellipse \secondellipse \thirdellipse;
	\draw[] (-5,0) -- (5,0)  coordinate (x axis);
	\draw[] (0,-5) -- (0,5) coordinate (y axis);
	\node [fill,draw,scale=0.3,circle]at (-0.5,0.4) {};
	\draw (-0.8,0.8) node  {$u$};
	\draw[] (-0.5,0.4) -- (1,0.4); 
	\draw (0.2,0.7) node  {$r_b$};
	\node [draw,scale=0.2,regular polygon,regular polygon sides=3,fill=black!100]at (1,0.4){};
	\node [draw,scale=0.2,regular polygon,regular polygon sides=3,fill=black!100]at (2,1.4){};
	\node [draw,scale=0.3,regular polygon,regular polygon sides=3,fill=black!100]at (-3,-1.6){};
	\draw[] (-0.5,0.4) -- (-3,-1.6); 
	\node [draw,scale=0.2,regular polygon,regular polygon sides=3,fill=black!100]at (3,-1.6){};
	\node [draw,scale=0.2,regular polygon,regular polygon sides=3,fill=black!100]at (4,1.8){};
	\node [draw,scale=0.2,regular polygon,regular polygon sides=3,fill=black!100]at (2,4.6){};
	\node [draw,scale=0.2,regular polygon,regular polygon sides=3,fill=black!100]at (-3,4.2){};
	\node [draw,scale=0.2,regular polygon,regular polygon sides=3,fill=black!100]at (-2,2.2){};
	\node [draw,scale=0.2,regular polygon,regular polygon sides=3,fill=black!100]at (-2,-4.2){};
\end{tikzpicture}
\caption[]{Type-I}
\label{fig:PCSITypeI}%
\end{subfigure}\hfill%
\begin{subfigure}{.33\columnwidth}
\centering
\begin{tikzpicture}[scale=0.5]
	\draw[fill=gray!20,draw=black!0] (-5,-5)--(5,-5) -- (5,5) -- (-5,5)-- cycle;
	    \def\firstellipse{(0,0) circle (4)}
	    \def\secondellipse{(2,1.5) circle (1.2)}
	    \def\thirdellipse{(2,1.5) circle (2.5)}
	    \fill[gray!20] \firstellipse \secondellipse;
	
	    \begin{scope}
	        \clip \firstellipse;
	        \fill[white] \secondellipse;
	        \fill[white] \thirdellipse;
	    \end{scope}
	
	    \draw \firstellipse \secondellipse \thirdellipse;
		\draw[] (-5,0) -- (5,0)  coordinate (x axis);
		\draw[] (0,-5) -- (0,5) coordinate (y axis);
		\node [fill,draw,scale=0.3,circle]at (2,1.5) {};
		\draw (2.1,1.9) node  {$u$};
		\draw[] (0,0) -- (2,1.5); 
		\draw (1.1,1.1) node  {$r_u$};
		\node [draw,scale=0.2,regular polygon,regular polygon sides=3,fill=black!100]at (3.1,1){};
		\draw[] (2,1.5) -- (3.1,1); 
		\draw (2.7,1.4) node  {$r_b$};
		\node [draw,scale=0.2,regular polygon,regular polygon sides=3,fill=black!100]at (0.4,3){};
		\node [draw,scale=0.2,regular polygon,regular polygon sides=3,fill=black!100]at (1,0.4){};
		\node [draw,scale=0.2,regular polygon,regular polygon sides=3,fill=black!100]at (-2,-2){};
		\node [draw,scale=0.2,regular polygon,regular polygon sides=3,fill=black!100]at (3,-1.6){};
		\node [draw,scale=0.2,regular polygon,regular polygon sides=3,fill=black!100]at (4,1.8){};
		\node [draw,scale=0.2,regular polygon,regular polygon sides=3,fill=black!100]at (2,4.6){};
		\node [draw,scale=0.2,regular polygon,regular polygon sides=3,fill=black!100]at (-3,4.2){};
		\node [draw,scale=0.2,regular polygon,regular polygon sides=3,fill=black!100]at (-2,2.2){};
		\node [draw,scale=0.2,regular polygon,regular polygon sides=3,fill=black!100]at (-2,-4.2){};
\end{tikzpicture}
 \caption{Type-II}%
 \label{fig:PCSITypeII}%
 \end{subfigure} \hfill%
 \begin{subfigure}{.33\columnwidth}
 \centering
\begin{tikzpicture}[scale=0.5]
	\draw[fill=gray!20,draw=black!0] (-5,-5)--(5,-5) -- (5,5) -- (-5,5)-- cycle;
	    \def\firstellipse{(0,0) circle (4)}
	    \def\secondellipse{(2,-2.2) circle (1.5)}
	    \def\thirdellipse{(2,-2.2) circle (2.5)}
	    \fill[gray!20] \firstellipse \secondellipse;
	
	    \begin{scope}
	        \clip \thirdellipse;
	        \fill[white] \secondellipse;
	        \fill[white] \firstellipse;
	    \end{scope}
	
	    \draw \firstellipse \secondellipse \thirdellipse;
		\draw[] (-5,0) -- (5,0)  coordinate (x axis);
		\draw[] (0,-5) -- (0,5) coordinate (y axis);
		\node [fill,draw,scale=0.3,circle]at (2,-2.2) {};
		\draw (1.8,-2.4) node  {$u$};
		\draw[] (0,0) -- (2,-2.2); 
		\draw (1.5,-1.3) node  {$r_u$};
		\node [draw,scale=0.2,regular polygon,regular polygon sides=3,fill=black!100]at (3,-1.1){};
		\draw[] (2,-2.2) -- (3,-1.1); 
		\draw (2.5,-1.3) node  {$r_b$};
		\node [draw,scale=0.2,regular polygon,regular polygon sides=3,fill=black!100]at (0.4,3){};
		\node [draw,scale=0.2,regular polygon,regular polygon sides=3,fill=black!100]at (-1,-2){};
		\node [draw,scale=0.2,regular polygon,regular polygon sides=3,fill=black!100]at (1,0.4){};
		\node [draw,scale=0.2,regular polygon,regular polygon sides=3,fill=black!100]at (-3,-1.6){};
		\node [draw,scale=0.2,regular polygon,regular polygon sides=3,fill=black!100]at (4,1.8){};
		\node [draw,scale=0.2,regular polygon,regular polygon sides=3,fill=black!100]at (2,4.6){};
		\node [draw,scale=0.2,regular polygon,regular polygon sides=3,fill=black!100]at (-3,4.2){};
		\node [draw,scale=0.2,regular polygon,regular polygon sides=3,fill=black!100]at (-2,2.2){};
		\node [draw,scale=0.2,regular polygon,regular polygon sides=3,fill=black!100]at (-2,-4.2){};
\end{tikzpicture}
  \caption{Type-III}%
  \label{fig:PCSITypeIII}%
  \end{subfigure}\hfill
 \caption{Illustration of type-I,II and III users and their interference region. The shaded region is the possible locations of interference. For type-III user no interference region goes beyond the cluster disc because of the nearest neighbor connectivity which implies no interferer is closer than tagged BS.}
 \label{fig:CloudClustPCSI}
 \end{figure}
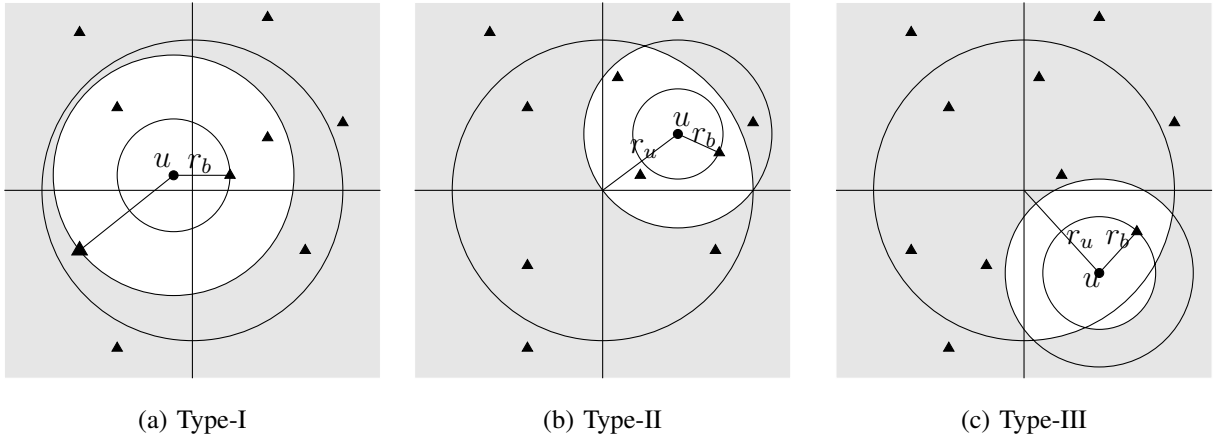
%

Last two cases can be neglected in the derivation, because they are not belonging to the cases we are analyzing here.
\subsubsection{Type I User}
The users who are finding the $L$th nearest BS is inside the cluster of the tagged BS are included in this case. Here, $0<r_u<R$, $0\leq r_b<R-r_u$ , $r_b\leq r_l \leq R-r_u$ and this assumption makes the user always connected to a BS inside its cluster and the interfering BSs will be the entire region outside the circle of radius $r_l$.
 Now the $\sinr$ of the user $u$ can be written as,
     \begin{align}
    \sinr&=\dfrac{|h_{b}|^2r_{b}^{-\alpha}}{\sigma^2+\sum_{y\in\Phi\setminus D_L}|h_y|^2 \|y-u \|^{-\alpha}}.
     \end{align}
 By shifting origin to $\us$,
     \begin{align}
    \sinr&=\dfrac{|h_{b}|^2r_{b}^{-\alpha}}{\sigma^2+\sum_{y\in\Phi\setminus \hat{D_L}}|h_y|^2 \|y \|^{-\alpha}}.
     \end{align}
      Therefore the conditional distribution of $\sir$ is,
 \begin{align}
 \P[\sir>T|r_u,r_b,r_l]&=\P\left[\dfrac{|h_{b}|^2r_{b}^{-\alpha}}{\sum_{y\in\Phi\setminus \hat{D_L}}|h_y|^2 \|y \|^{-\alpha}}>T|r_u,r_b,r_l\right],\\
 &=\P\left[|h_{b}|^2>Tr_b^\alpha\sum_{y\in\Phi\setminus \hat{D_L}}|h_y|^2 \|y\|^{-\alpha}|r_u,r_b,r_l\right],\\
 &=\L_I(T r_b^\alpha)|_{r_u,r_b,r_l}.
 \end{align}
 Laplace transform of interference,
 \begin{align}
 \L_I(s)&=\exp\left({-2\pi\lambda\frac{s r_l^{2-\alpha } \, _2F_1\left(1,\frac{\alpha -2}{\alpha };2-\frac{2}{\alpha };-s r_l^{-\alpha }\right)}{\alpha -2}}\right)
 \label{eq:laplace_exact3}
 \end{align}
\subsubsection{Type II User}
When the user finds the tagged BS inside the cluster and the $L$th BS distance $r_l$ is as given in ~Fig.\ref{fig:CloudClustPCSI},  \ie,   $D_u$, inside and $L$th interferer circle,  $D_L$  crossing the cluster area. Here, $0<r_u<R$, $0\leq r_b<R-r_u$ and $R-r_u\leq r_l \leq R+r_u$.

Laplace transform of interference,
\begin{align}
\L_I(s)&= \exp\left(-\lambda\int_{\varphi=-\Theta}^{\Theta}\int_{r_y=R^\prime}^{\infty}\left(1-\frac{1}{1+s r_y^{-\alpha}} \right)r_y\d r_y\d \varphi\right.\nonumber\\
&\left. -\lambda\int_{\varphi=\Theta}^{2\pi-\Theta}\int_{r_y=r_l}^{\infty}\left(1-\frac{1}{1+s r_y^{-\alpha}} \right)r_y\d r_y\d \varphi\right)\nonumber\\
&= \exp\left(-\lambda\int_{\varphi=-\Theta}^{\Theta}\frac{s (R^\prime)^{2-\alpha } \, _2F_1\left(1,\frac{\alpha -2}{\alpha
   };2-\frac{2}{\alpha };-(R^\prime)^{-\alpha } s\right)}{\alpha -2}\d \varphi\right)\nonumber\\
&\times \exp\left( -\lambda 2(\pi-\Theta)\frac{s r_l^{2-\alpha } \, _2F_1\left(1,\frac{\alpha
   -2}{\alpha };2-\frac{2}{\alpha };-s r_l^{-\alpha
   }\right)}{\alpha -2}\right)\label{eq:laplace_exact4}
\end{align}
where $\Theta$ is the angle between $\u$ and the point of intersection of the circles,  $\Theta=\pi-\arccos\left[ \frac{r_l^2+r_u^2-R^2}{2 r_l r_u}\right]$ and $R^\prime=\sqrt{b_n^2 r_u^2-r_u^2+R^2}-b_n r_u$ is the distance from origin to the arc of cluster disc from $\Theta$ to $2\pi-\Theta$. Also we have, $b_n=\cos(\varphi)$.


\subsubsection{Type III User}
When the user finds the tagged BS inside the cluster and the $L$th BS distance $r_l$ is as given in ~Fig.\ref{fig:CloudClustPCSI}, \ie,   $D_U$ and  $D_L$ are intersecting the cluster disc. Here, $0<r_u<R$, $R-r_u\leq r_b<R+r_u$ and $r_b\leq r_l \leq R+r_u$.

Laplace transform of interference,
\begin{align}
\L_I(s)&= \exp\left(-2\lambda\int_{\varphi=\Theta_1}^{\Theta_2}\int_{r_y=R^\prime}^{\infty}\left(1-\frac{1}{1+s r_y^{-\alpha}} \right)r_y\d r_y\d \varphi\right.\nonumber\\
&\left. -\lambda\int_{\varphi=-\Theta_1}^{\Theta_1}\int_{r_y=r_b}^{\infty}\left(1-\frac{1}{1+s r_y^{-\alpha}} \right)r_y\d r_y\d \varphi\right.\nonumber\\
&\left. -\lambda\int_{\varphi=\Theta_2}^{2\pi-\Theta_2}\int_{r_y=r_l}^{\infty}\left(1-\frac{1}{1+s r_y^{-\alpha}} \right)r_y\d r_y\d \varphi\right)\label{eq:laplace_exact5},
\end{align}
where $\Theta_1$ is the angle between $\mathbf{u}$ and the point of intersection of the circles $D_c$ and $D_u$, \ie,  $\Theta_1=\pi-\arccos\left[ \frac{r_b^2+r_u^2-R^2}{2 r_b r_u}\right]$, $\Theta_2$ is the angle between $\mathbf{u}$ and the point of intersection of the circles $D_c$ and $D_L$, \ie,  $\Theta_2=\pi-\arccos\left[ \frac{r_l^2+r_u^2-R^2}{2 r_l r_u}\right]$ and $R^\prime=\sqrt{b_n^2 r_u^2-r_u^2+R^2}-b_n r_u$ is the distance from origin to the arc of cluster disc from $\Theta$ to $2\pi-\Theta$. Also we have, $b_n=\cos(\varphi)$.


 \end{document}